\documentclass[11pt]{article}
\usepackage[margin=1in]{geometry}
\usepackage{amsmath,amssymb,amsthm,mathtools,bm}
\usepackage{aliascnt}
\usepackage{microtype}
\usepackage{enumitem}
\usepackage[colorlinks=true,linkcolor=blue,citecolor=blue,urlcolor=blue]{hyperref}
\hypersetup{
  pdftitle={The Sample Complexity of Fidelity Estimation to a Known Rank-r Reference State Is Theta-tilde of r-squared over epsilon-squared},
  pdfauthor={Gye Jin Lee and Sunghyeon Jo},
  pdfsubject={Quantum property testing and fidelity estimation}
}
\usepackage[nameinlink,capitalize,noabbrev]{cleveref}
\usepackage{booktabs}
\usepackage{xcolor}
\usepackage{array}

\newtheorem{theorem}{Theorem}[section]
\newaliascnt{lemma}{theorem}
\newtheorem{lemma}[lemma]{Lemma}
\aliascntresetthe{lemma}
\newaliascnt{proposition}{theorem}
\newtheorem{proposition}[proposition]{Proposition}
\aliascntresetthe{proposition}
\newaliascnt{corollary}{theorem}
\newtheorem{corollary}[corollary]{Corollary}
\aliascntresetthe{corollary}
\newaliascnt{definition}{theorem}

\aliascntresetthe{definition}
\newaliascnt{remark}{theorem}

\aliascntresetthe{remark}
\newaliascnt{claim}{theorem}

\aliascntresetthe{claim}

\crefname{theorem}{Theorem}{Theorems}
\crefname{lemma}{Lemma}{Lemmas}
\crefname{proposition}{Proposition}{Propositions}
\crefname{corollary}{Corollary}{Corollaries}
\crefname{definition}{Definition}{Definitions}
\crefname{remark}{Remark}{Remarks}
\crefname{claim}{Claim}{Claims}

\newcommand{\tr}{\operatorname{tr}}
\newcommand{\rank}{\operatorname{rank}}

\newcommand{\E}{\mathbb{E}}
\newcommand{\Prb}{\mathbb{P}}
\newcommand{\C}{\mathbb{C}}
\newcommand{\R}{\mathbb{R}}
\newcommand{\Fid}{\mathrm{F}}
\newcommand{\TV}{\operatorname{TV}}

\newcommand{\polylog}{\operatorname{polylog}}

\newcommand{\ket}[1]{\lvert #1\rangle}

\newcommand{\proj}[1]{\lvert #1\rangle\!\langle #1\rvert}
\newcommand{\norm}[1]{\left\lVert #1\right\rVert}
\newcommand{\abs}[1]{\left\lvert #1\right\rvert}

\title{The Sample Complexity of Fidelity Estimation\\
to a Known Rank-$r$ Reference State Is
$\widetilde{\Theta}(r^2/\varepsilon^2)$}
\author{%
Gye Jin Lee\thanks{Equal contribution.}\\
Georgia Institute of Technology\\
\href{mailto:gyejinlee@gatech.edu}{gyejinlee@gatech.edu}
\and
Sunghyeon Jo\footnotemark[1]\\
Georgia Institute of Technology\\
\href{mailto:sjo65@gatech.edu}{sjo65@gatech.edu}}
\date{July 30, 2026}

\begin{document}
\setcounter{tocdepth}{2}
\setcounter{secnumdepth}{2}
\maketitle

\begin{abstract}
We settle the sample complexity of estimating the root Uhlmann fidelity
\[
\Fid(\rho,\sigma)=\tr\sqrt{\sqrt\sigma\rho\sqrt\sigma}
\]
between an unknown state $\rho$ and a known rank-$r$ reference state $\sigma$.  Writing $S(r,\varepsilon)$ for the sample complexity at additive error $\varepsilon$, we resolve the open problem posed by Wang by closing, up to logarithmic factors, the gap between the previously known bounds $\Omega(r/\varepsilon^2)$ and $O(r^2/\varepsilon^2)$.  We prove
\[
S(r,\varepsilon)=\widetilde\Theta\!\left(\frac{r^2}{\varepsilon^2}\right),
\qquad 0<\varepsilon\le\varepsilon_0,
\]
for a universal constant $\varepsilon_0>0$.
The lower bound already holds for the fixed reference $\sigma=P_T/r$ on a $2r$-dimensional system and for a hard family of states that do not commute with $\sigma$.  The proof combines exact spectral moment matching, a radially size-biased doubly correlated Wishart model, and the Cauchy identity, reducing state indistinguishability to a long-cycle estimate for a weighted random permutation.  A direct-sum embedding and binomial thinning yield the optimal $1/\varepsilon^2$ dependence.

We also prove a near-quadratic lower bound $\widetilde\Omega(r^2)$ for quantum spectrum estimation at constant accuracy.  Combined with the recent $O(r^2(\log\log r/\log r)^2)$ upper bound, this determines the polynomial order of the sample complexity in this regime and establishes a near-quadratic barrier.
\end{abstract}

\clearpage
\tableofcontents

\section{Introduction}

Is fidelity estimation to a known rank-$r$ reference state inherently quadratic in $r$?  We answer this question affirmatively, up to logarithmic factors.  In particular, we resolve the factor-$r$ gap left open by Wang and determine the sample complexity as $\widetilde\Theta(r^2/\varepsilon^2)$.

Fidelity is a fundamental measure of proximity between quantum states.  When the reference state is known classically, estimating its fidelity with an unknown state provides a basic way to assess a preparation or reconstruction without first learning the entire state.  The low-rank setting isolates a central question: how strongly can the rank of the reference control the number of copies required?

\subsection{Main results}

For positive semidefinite operators $\rho,\sigma$ of trace one, we use the root-fidelity convention
\[
\Fid(\rho,\sigma)=\norm{\sqrt\rho\sqrt\sigma}_1
=\tr\sqrt{\sqrt\sigma\rho\sqrt\sigma}.
\]
Let $S(r,\varepsilon)$ denote the least integer $n$ for which there is a measurement on $n$ copies of an arbitrary unknown state $\rho$, followed by a classical estimator $\widehat F$, such that for every classically known reference state $\sigma$ of rank at most $r$,
\[
\Prb\bigl[\abs{\widehat F-\Fid(\rho,\sigma)}\le\varepsilon\bigr]\ge\frac23.
\]
The ambient dimension is unrestricted, and the measurement may be collective.  Recent work established
\[
\Omega\!\left(\frac r{\varepsilon^2}\right)
\le S(r,\varepsilon)
\le O\!\left(\frac{r^2}{\varepsilon^2}\right)
\]
and asked whether the factor-$r$ gap can be closed~\cite{Wang2026}.  We close it up to logarithmic factors.

\begin{theorem}[Sample complexity of fidelity estimation]\label{thm:main}
There are universal constants $c,C,\varepsilon_0>0$ and $r_0\in\mathbb N$ such that, for every $r\ge r_0$ and every $0<\varepsilon\le\varepsilon_0$,
\[
S(r,\varepsilon)
\ge
c\frac{r^2}{\varepsilon^2(\log r)^C}.
\]
Consequently, together with Wang's upper bound,
\[
S(r,\varepsilon)=\widetilde\Theta\!\left(\frac{r^2}{\varepsilon^2}\right)
\]
throughout the same parameter range.
The lower bound holds for the fixed, known reference state
\[
\sigma=\frac1rP_T
\]
of rank $r$ on a $2r$-dimensional system.  Moreover, every state $\rho$ in the hard family with a nonzero matrix parameter satisfies $[\rho,\sigma]\ne0$.
\end{theorem}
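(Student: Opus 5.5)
The plan is a two-point (or two-ensemble) indistinguishability argument with the reference fixed to $\sigma=P_T/r$ on $\C^{2r}=T\oplus T^\perp$, where $\dim T=r$.

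\textbf{Step 1: the two ensembles.} For a block matrix $X$ mapping $T^\perp\to T$, consider states of the form
\[
\rho_X=\begin{pmatrix}A & X\\ X^* & B\end{pmatrix}.
\]
Since $\sigma$ is proportional to $P_T$, we have $\Fid(\rho,\sigma)=\tr\sqrt{A}/\sqrt r$. So the fidelity depends only on the spectrum of the $T$-block $A$, while the off-diagonal block $X$ makes $[\rho,\sigma]\ne 0$ whenever $X\ne0$. We build two ensembles $\mathcal D_0,\mathcal D_1$ of such states with these properties:
\begin{itemize}
\item[(i)] the spectral moments $\tr A^k$ of the compressed block agree between the two ensembles for all $k\le K\approx\log r$ (exact spectral moment matching);
\item[(ii)] $\tr\sqrt A$ differs by a constant times $\sqrt r$, giving a constant fidelity gap.
\end{itemize}
Concretely, $A$ is taken with a random Haar-rotated eigenbasis. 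The remaining mass, and $X$, are chosen so that $\rho$ is a rank-deficient random state of Wishart type, $\rho\propto GG^*$, correlated on both sides. This is the doubly correlated Wishart model. The radial size-biasing is designed to make the $n$-copy averaged state exactly computable.

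\textbf{Step 2: reduce to a permutation statistic.} By unitary invariance, $\E_{\mathcal D_b}\rho^{\otimes n}$ is block diagonal in Schur--Weyl duality. The optimal distinguishing probability is therefore governed by the total variation distance between the induced distributions over Young diagrams $\lambda$, together with the moments of the spectrum. Using the Cauchy identity
\[
\sum_\lambda s_\lambda(x)s_\lambda(y)=\prod_{i,j}(1-x_iy_j)^{-1},
\]
the size-biased Wishart averages turn into a weighted random permutation on $n$ points. Its cycle type encodes power sums $p_k$ of the spectrum. Because moments agree up to order $K$, the two induced distributions coincide on the event that every cycle has length at most $K$. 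The TV distance is then bounded by the probability of a cycle of length greater than $K$.

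\textbf{Step 3 (the main obstacle): the long-cycle estimate.} We must show that for $n\le c\,r^2/\polylog r$, the weighted permutation has a cycle of length more than $K$ with probability $o(1)$. The weights are roughly $(n/r^2)^{k/2}$ per $k$-cycle up to combinatorial factors, after tuning the block sizes. The first attempt is a union bound on the expected number of long cycles, $\sum_{k>K} n^{k/2}\cdot(\text{weight})/k$, using a generating-function (Ewens-type) computation. The expected difficulty is controlling the normalization uniformly in $n$, and making the exact moment matching compatible with positivity and the size-biasing. The polylog loss comes from taking $K\approx\log r$.

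\textbf{Step 4: the $\varepsilon$ dependence.} To obtain $1/\varepsilon^2$, embed the constant-gap instance as a direct sum $\rho_\varepsilon=(1-\eta)\tau\oplus\eta\rho_X$ with $\eta\approx\varepsilon$, after rescaling so the fidelity gap becomes about $\varepsilon$. This adjusts $\sigma$ while keeping it of the form $P_T/r$ up to relabeling. Measuring $n$ copies then reveals only a binomial number, about $\eta n$, of copies of the hard block; this is the binomial thinning step. Indistinguishability therefore persists for $n\lesssim r^2/(\varepsilon\,\polylog r)$. The remaining $1/\varepsilon$ factor comes from also tuning the constant-gap construction to gap $\sqrt{\varepsilon}$-type perturbations, so that the hard block has $\varepsilon$-scaled separation needing $r^2/\varepsilon$ copies. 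Combining these gives $r^2/(\varepsilon^2(\log r)^C)$. Pairing this lower bound with Wang's upper bound yields the $\widetilde\Theta$ statement.
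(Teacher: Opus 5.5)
Your Steps~1--3 follow the paper's architecture: moment twins, an $S^m$-tilted doubly correlated Wishart prior, a Schur law $\propto s_\lambda(a^{(b)})^2$, the Cauchy identity, and a long-cycle bound. There is, however, a genuine error in Step~4, and your analysis does not support the noncommuting embedding.

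\textbf{The $\varepsilon$-accounting in Step~4 is wrong.} Root fidelity is a square root. If the $T$-side block carries weight $\eta$, then $\sqrt\sigma\rho\sqrt\sigma$ scales linearly in $\eta$, so $\Fid$ scales as $\sqrt\eta$, not as $\eta$. With $\eta\approx\varepsilon$ your gap is $\approx\sqrt\varepsilon$, not $\varepsilon$. Your ``remaining $1/\varepsilon$ factor'' relies on a base instance with a $\sqrt\varepsilon$-type gap that needs $r^2/\varepsilon$ copies. Nothing in Steps~1--3 produces such an instance. The base construction is inherently constant-gap: rank $\le\alpha r$ on one side versus nuclear norm $\ge\kappa\sqrt r$ on the other, at $m\approx r^2/\polylog r$ copies. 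The correct accounting is simpler and suffices. Keep the constant-gap base instance and give it amplitude weight $q=(4\varepsilon/\Delta)^2$, so the gap becomes $\sqrt q\,\Delta=4\varepsilon$. Binomial thinning with $n=\lfloor m/(4q)\rfloor$ then already yields $n\gtrsim r^2/(\varepsilon^2\polylog r)$.

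\textbf{Your Step~2 does not cover a noncommuting family.} Step~2 invokes unitary invariance. That holds for the base ensemble $GG^*/\norm G_F^2$ on $\C^r$, but not for a family $\rho_X$ with a nonzero off-diagonal block relative to the fixed $T$. Conversely, a classical direct sum $(1-\eta)\tau\oplus\eta\rho_G$ with $\rho_G$ supported on $T$ commutes with $P_T/r$, and it also uses more than $2r$ dimensions. So as written you do not get both indistinguishability and $[\rho,\sigma]\ne0$.

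The paper resolves this as follows:
\begin{itemize}
\item It runs the Schur--Weyl analysis on the purification $\ket{\psi_G}=\operatorname{vec}(G)/\norm G_F$, which is locally invariant under $U\otimes\overline V$.
\item It forms the coherent state $\sqrt q\ket{\psi_G}_{TE}+\sqrt{1-q}\ket\phi_{BE}$, with $\ket\phi$ a fixed maximally entangled state.
\item It uses the priors' invariance under $G\mapsto e^{i\theta}G$ to kill cross terms between occupation numbers, so the $n$-copy average is exactly $\bigoplus_k b_{n,q}(k)\,\mathcal V_{n,k}\overline\Psi^{(k)}\mathcal V_{n,k}^*$.
\item Tracing out $E$ gives states on $T\oplus B$ (dimension $2r$) whose off-diagonal block is $\propto G$ and whose fidelity is $\sqrt q\,T(G)$.
\end{itemize}

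\textbf{Smaller issues.}
\begin{itemize}
\item The matched moments are power sums of the deterministic vectors $a^{(b)}$, not of the random block $A$.
\item The twins must also satisfy $\max_i a_i^{(b)}\le B/r$ with $B=\polylog r$, have support $\le\alpha r$ on one side, and have a moderate block of size $\ge r/2$ on the other. Building these exactly at $K\asymp\log r/\log\log r$ is where the real work lies.
\item The long-cycle step is easy, not the main obstacle. From $u_{b,1}=1$ you get $Z^{(m)}\ge Z^{(m-k)}$, and the per-cycle factor is $\theta^{k-1}$ with $\theta=mB^2/r^2$, not $(n/r^2)^{k/2}$.
\item The Schur laws do not ``coincide on an event.'' The total-variation bound goes through the Hellinger affinity, $C_{01}/\sqrt{C_{00}C_{11}}\ge\sqrt{(1-\delta_0)(1-\delta_1)}$.
\item You still need to show that the $S^m$ tilt preserves the high-side nuclear-norm bound.
\end{itemize}
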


The noncommutativity clause shows that the lower bound is not an artifact of restricting the hard family to states diagonal in a common known eigenbasis; no commutativity promise is imposed on the unknown state.

The proof uses the two-prior method: the fidelity values are separated, but the corresponding $n$-copy average states are $o(1)$ apart in trace distance.  Before embedding into noncommuting states, this remains true even when the estimator receives a purification of the unknown state.

\subsection{A near-quadratic barrier for quantum spectrum estimation}

Our second main result applies the same lower-bound framework to quantum spectrum estimation at constant accuracy.  Very recent work of Pelecanos--Spilecki--Tang--Wright gives an upper bound of
\[
O\!\left(r^2\left(\frac{\log\log r}{\log r}\right)^2\right)
\]
and highlights the absence of comparable lower bounds~\cite{PSTW2026}.  Our result determines the polynomial order of the sample complexity in $r$ and matches their upper bound up to polylogarithmic factors.

The base experiment concerns the normalized square-root trace
\begin{equation}\label{eq:intro-half-functional}
T_r(\rho)=\frac1{\sqrt r}\tr\sqrt\rho.
\end{equation}
The following consequences are proved in \cref{sec:consequences}.

\begin{corollary}[Normalized square-root trace estimation]\label{cor:half-power-intro}
There are universal constants $c,C,\delta_0>0$ such that estimating $T_r(\rho)$ to additive error $\delta_0$ requires at least
\[
c\frac{r^2}{(\log r)^C}
\]
copies in the worst case.  The conclusion remains valid even if the estimator is given copies of an arbitrary purification of $\rho$.
\end{corollary}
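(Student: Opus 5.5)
We prove \cref{cor:half-power-intro} with the two-prior (fuzzy hypothesis) method. The plan is to reuse the base experiment from the proof of \cref{thm:main} \emph{before} it is embedded into noncommuting states. The key observation is that for the reference $\sigma=P_T/r$ and any $\rho$ supported on $T$, one has
\[
\Fid(\rho,\sigma)=\tr\sqrt{\rho/r}=T_r(\rho).
\]
So the base experiment in that proof is exactly a square-root-trace estimation problem, at constant accuracy. We therefore build two priors $\mu_0,\mu_1$ on rank-$\le r$ states (equivalently, on purifications living in an $r\times m$ system) with the following properties:
\begin{enumerate}[label=(\roman*)]
\item The normalized square-root trace is separated: $T_r(\rho)\ge a+\delta_0$ with high probability under $\mu_1$, and $T_r(\rho)\le a-\delta_0$ with high probability under $\mu_0$, for a constant $\delta_0$.
\item The averaged $n$-copy purified states $\E_{\mu_b}\bigl[\proj{\psi}^{\otimes n}\bigr]$ are $o(1)$ apart in trace distance whenever $n\le c\,r^2/(\log r)^C$.
\end{enumerate}
Le Cam's two-point argument then converts (i) and (ii) into the stated lower bound. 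Indistinguishability at the level of purifications automatically gives the strengthened clause, since any measurement on copies of $\rho$ is a special case.

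The priors come from the construction described in the abstract. Take $\rho=W/\tr W$, where $W=GG^\dagger$ and $G$ is a doubly correlated Gaussian matrix $A^{1/2}ZB^{1/2}$ whose Gaussian law is radially size-biased. The two spectral profiles of the correlation matrices are chosen so that the first $k\approx\log r$ spectral moments $\tr\rho^j$ coincide, while $\tr\sqrt\rho$ differs by $\Theta(\sqrt r)$. Such moment matching of discrete measures, with a $\Theta(1)$ gap in the $x^{1/2}$-integral after normalization, follows from the standard duality argument: the best polynomial approximation of $\sqrt{x}$ on a range with spectral ratio $\mathrm{poly}(r)$ has error $\Omega(1/k)$ relative to the scale. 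Choosing $k=\Theta(\log r)$ with appropriate scaling produces the polylog loss in the sample bound.

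For (ii), expand $\E\bigl[G^{\otimes n}\otimes \bar G^{\otimes n}\bigr]$ by Wick's theorem as a sum over permutations $\pi\in S_n$. The Cauchy identity decomposes this sum over Schur--Weyl isotypic blocks, and the radial size-biasing makes the weights depend on the correlation matrices only through power sums $p_\lambda$ evaluated on the cycle type of $\pi$. Because the first $k$ moments are matched, the contributions from permutations all of whose cycles have length at most $k$ coincide between the two priors. The trace distance is then bounded by the probability, under the induced weighted random permutation, of having a cycle of length greater than $k$.

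The main obstacle is this long-cycle estimate. We need the weighted permutation (an Ewens-type measure whose weights are the power sums $\theta_j=\tr(AB)^j$-type quantities) to have a cycle longer than $\log r$ with probability $o(1)$ when $n\ll r^2/\mathrm{polylog}$. The first approach to try is a union bound via the cycle-count generating function, which gives
\[
\Pr[\text{cycle of length } j]\lesssim \frac{\theta_j\, n^j}{j\, r^{2j}}\cdot \mathrm{poly}(r).
\]
Under our normalization the eigenvalues are $\approx 1/r$ and the weights pair two $r$-dimensional factors, so the sum over $j>k$ is geometrically small once $n\le r^2/(\log r)^C$. Verifying this ratio carefully, including the size-biasing correction, is the step that fixes the exponent $C$. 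A final concentration check then confirms that $T_r$ under each prior concentrates within $\delta_0/2$ of its mean, which follows from standard Wishart spectral concentration.
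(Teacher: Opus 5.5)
Your architecture is the paper's: two size-biased doubly correlated Wishart priors, reduction of the averaged state to its Schur-label law, Cauchy-identity cancellation of permutations whose cycles have length at most $K$, a union bound on long cycles, and the remark that indistinguishability of purified copies gives the strengthened clause. The genuine gap is property (i).

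The moment matching must be imposed on the correlation profiles $a^{(b)}$, because those are what the Cauchy identity sees. But $T_r$ is evaluated on $\rho_G=GG^*/\norm{G}_F^2$. Its spectrum consists of the normalized squared singular values of $D_b^{1/2}ZD_b^{1/2}$, not of $a^{(b)}$; asymptotically it is a free multiplicative convolution of the profile with itself and with the Marchenko--Pastur law. A $\Theta(\sqrt r)$ gap in $\sum_i\sqrt{a^{(b)}_i}$ from best-approximation duality therefore gives no control on $\E\,T_r(\rho_G)$. Already for $a=(1/r,\ldots,1/r)$ one has $\frac{1}{\sqrt r}\sum_i\sqrt{a_i}=1$ but $T_r(\rho_G)\to 8/(3\pi)$, and there is no monotone transfer. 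Concentration only places $T_r$ near its mean under each prior; it cannot show that the two means are $2\delta_0$ apart.

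The paper obtains separation from structure rather than duality. Barycentric weights on the power-spaced nodes $j^M$ put almost all of one measure at $0$ and almost all of the other at a single point of $[1/2,2]$. After exact equal-weight correction, $a^{(0)}$ has at most $\alpha r$ nonzero entries, while $a^{(1)}$ has at least $r/2$ entries in $[1/(4r),4/r]$. Then every $G$ in the low prior has rank at most $\alpha r$, so $T\le\sqrt\alpha$ deterministically by Cauchy--Schwarz on singular values. In the high prior, compressing to the moderate block and a Ginibre nuclear-norm bound give $\norm{G}_*\ge\kappa\sqrt r$ outside an event of Gaussian probability $e^{-cr^2}$.

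This must also survive the $S^m$ tilt with $m\approx r^2/\polylog r$. The paper transfers the tail by Cauchy--Schwarz, using $\E S^{2m}\le e^{2\zeta m^2}$ with $\zeta m^2=o(r^2)$, and separately shows $S\le 4$ with probability $1-e^{-cm}$. A ``standard Wishart concentration'' check under the untilted law supplies neither step.

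Three further points need repair.
\begin{enumerate}
\item As written, $G=A^{1/2}ZB^{1/2}$ has no Haar rotations. The $\lambda$-block of the averaged state then contains $q_\lambda(A)$ and depends on the hypothesis beyond the label law; a computational-basis measurement on a few copies essentially samples from $a^{(b)}$. The paper's $G_b=UD_b^{1/2}ZD_b^{1/2}V^*$, with independent Haar $U,V$, makes the blocks $\Omega_\lambda$ common, so the trace distance equals the total variation distance between the laws proportional to $s_\lambda(a^{(b)})^2$.
\item The cycle weights are $p_j(a)p_j(b)=\tr A^j\,\tr B^j$, not $\tr(AB)^j$.
\item The cancellation requires exact equality $p_k(a^{(0)})=p_k(a^{(1)})$ for $k\le K$, with $r$ equal-weight entries and maximal entry $\polylog(r)/r$; otherwise $\theta=mB^2/r^2$ is not small. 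Duality produces measures with arbitrary weights. The paper's Newton-type packet correction achieves exact equal-weight matching but only allows $K=O(\log r/\log\log r)$, not $\Theta(\log r)$. This is compensated by taking $L=(\log r)^A$ with $Ac_{\mathrm K}>4$.
\end{enumerate}
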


\begin{corollary}[Quantum spectrum estimation]\label{cor:spectrum-intro}
There are universal constants $c,C,\delta_0>0$ such that any procedure which, from copies of an arbitrary $r$-dimensional state $\rho$, outputs an estimate of its spectrum within total variation distance $\delta_0$ with success probability at least $2/3$ requires
\[
c\frac{r^2}{(\log r)^C}
\]
copies.  Together with the upper bound
\[
O\!\left(r^2\left(\frac{\log\log r}{\log r}\right)^2\right)
\]
of Pelecanos--Spilecki--Tang--Wright~\cite{PSTW2026}, this determines the constant-accuracy spectrum-estimation complexity up to polylogarithmic factors.
\end{corollary}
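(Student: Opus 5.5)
We derive \cref{cor:spectrum-intro} from \cref{cor:half-power-intro} by a direct reduction: an accurate spectrum estimate yields an accurate estimate of $T_r(\rho)$ through a classical post-processing map. The lower bound is then inherited, and the matching upper bound is quoted from~\cite{PSTW2026}.

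\textbf{Setup of the reduction.} Let $\lambda=(\lambda_1,\dots,\lambda_r)$ be the spectrum of $\rho$, viewed as a probability vector on $[r]$. Suppose a procedure outputs a probability vector $\widehat\lambda$ with $\TV(\widehat\lambda,\lambda)\le\delta$, up to the natural identification of spectra under sorting or permutation. Note that $T_r(\rho)=\frac1{\sqrt r}\sum_i\sqrt{\lambda_i}$ is permutation invariant. So we may estimate it by the plug-in quantity $\widehat T=\frac1{\sqrt r}\sum_i\sqrt{\widehat\lambda_i}$.

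\textbf{Key continuity estimate.} We need to show that $T_r$ is uniformly continuous in total variation, independently of $r$. Using $|\sqrt a-\sqrt b|\le\sqrt{|a-b|}$ and Cauchy--Schwarz,
\[
\bigl|\widehat T-T_r(\rho)\bigr|\le\frac1{\sqrt r}\sum_{i=1}^r\sqrt{|\widehat\lambda_i-\lambda_i|}\le\frac1{\sqrt r}\sqrt r\,\Bigl(\sum_i|\widehat\lambda_i-\lambda_i|\Bigr)^{1/2}=\sqrt{2\delta}.
\]
If the procedure instead outputs a sorted spectrum, we compare sorted vectors. Sorting does not increase the $\ell_1$ distance between two vectors, so the same bound holds after sorting both.

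\textbf{Choosing the accuracy.} Let $\delta_0'$ denote the constant from \cref{cor:half-power-intro}, and choose the spectrum accuracy $\delta_0=(\delta_0')^2/2$. Then any spectrum estimator with accuracy $\delta_0$ and success probability $2/3$ becomes a $T_r$ estimator with additive error $\delta_0'$, using the same number of copies. We must also check that the hard instances in \cref{cor:half-power-intro} are $r$-dimensional, or at least of rank at most $r$, so that they are legal inputs here. This should follow from the base construction, which is a Wishart-type family of rank at most $r$. If the ambient dimension is larger, we restrict to the support: spectrum estimation of a rank-$r$ state embedded in a larger space is only harder, or we append zeros. Applying \cref{cor:half-power-intro} then gives the lower bound $c\,r^2/(\log r)^C$.

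\textbf{Main obstacle.} The only delicate point is dimension bookkeeping. The hard family for \cref{cor:half-power-intro} might use a dimension $r'$ that is a constant multiple of $r$, or might be normalized with a different $r$. In that case we rescale the constants $c$ and $C$, using $r'=\Theta(r)$ and the fact that $\log r'=\Theta(\log r)$. Finally, comparing with the upper bound $O(r^2(\log\log r/\log r)^2)$ of~\cite{PSTW2026}, the two bounds differ by at most a polylogarithmic factor, which gives the stated determination.
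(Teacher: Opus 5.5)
Your proposal is correct and is essentially the paper's argument. The paper uses the same bound $\abs{T_r(p)-T_r(q)}\le\sqrt{2\TV(p,q)}$ to turn a $\delta_0$-accurate spectrum estimate into a plug-in estimate of $T_r(\rho_G)$, then thresholds directly between the two base priors of \eqref{eq:base-functional-consequence}. You instead package that last step as a black-box appeal to \cref{cor:half-power-intro}, with $\delta_0=(\delta_0')^2/2$ in place of the paper's more conservative $\sqrt{2\delta_0}\le\Delta/8$. Your dimension concern does not actually arise, because the hard states $\rho_G=GG^*/\norm{G}_F^2$ already live on $\C^r$.
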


The same priors yield a near-quadratic lower bound for a two-sided, constant-distance rank-testing problem; see \cref{cor:rank-testing}.

\subsection{Prior work}

\paragraph{Fidelity estimation.}
Fidelity estimation has been studied in sample, query, and restricted-measurement models.  Under collective measurements, Utsumi--Nakata--Wang--Takagi obtained an $O(r_{\min}^2\log^2(1/\varepsilon)/\varepsilon^4)$ upper bound for two-state fidelity estimation in both the purified- and mixed-sample access models~\cite{UtsumiEtAl2025}.  Here $r_{\min}:=\min\{\rank(\rho),\rank(\sigma)\}$ is the smaller of the two ranks.  The bound specializes to $O(r^2\log^2(1/\varepsilon)/\varepsilon^4)$ when the known reference has rank $r$.  Wang proved the dimension-independent bounds $\Omega(r/\varepsilon^2)$ and $O(r^2/\varepsilon^2)$ for a known rank-$r$ reference~\cite{Wang2026}; Lowe--Tan independently obtained an $O(r^2/\varepsilon^2)$ algorithm when both unknown states have rank at most $r$~\cite{LoweTan2026}.  The pure-reference case has optimal dimension-free sample complexity~\cite{FangWang2026}.  Our result gives the corresponding near-quadratic lower bound for a known mixed reference.

\paragraph{Spectrum estimation and unitarily invariant properties.}
Weak Schur sampling is the canonical measurement for unitarily invariant state properties~\cite{KeylWerner2001,ODonnellWright2021}.  O'Donnell--Wright established sharp bounds for several testing problems and analyzed the empirical Young diagram estimator~\cite{ODonnellWright2021}.  Chen--Liu--Wang developed non-plug-in estimators for trace powers, including $0<q<1$~\cite{ChenWang2026}.  Pelecanos--Spilecki--Tang--Wright improved on the quadratic Keyl--Werner spectrum-estimation algorithm, using $O(r^2(\log\log r/\log r)^2)$ copies for constant total-variation error~\cite{PSTW2026}.  \Cref{cor:spectrum-intro} matches this bound up to powers of $\log r$.

\paragraph{Wishart moments and symmetric functions.}
The relation between complex Wishart moments and symmetric-group characters is classical~\cite{Graczyk2003}.  Here an $m$th radial size bias cancels the normalization in $m$ copies, producing a prior-averaged state whose Schur-label law is proportional to $s_\lambda(a)^2$.  The fixed-degree Cauchy identity converts moment matching into cancellation of all short-cycle contributions.  The size-biased prior construction and long-cycle reduction are the main new ingredients.

\section{Preliminaries}

\subsection{Distances, vectorization, and the two-prior method}

For density operators $\rho,\tau$, write
\[
d_{\mathrm{tr}}(\rho,\tau)=\frac12\norm{\rho-\tau}_1.
\]
The optimal success probability for distinguishing two equiprobable states is
\[
\frac12\bigl(1+d_{\mathrm{tr}}(\rho,\tau)\bigr).
\]
For probability distributions $P,Q$ on a countable set, let
\[
\mathcal A(P,Q)=\sum_x\sqrt{P(x)Q(x)}
\]
be their Hellinger affinity.  The elementary inequality
\begin{equation}\label{eq:tv-affinity}
\TV(P,Q)\le\sqrt{1-\mathcal A(P,Q)^2}
\end{equation}
follows from Cauchy--Schwarz applied to
$\sum_x\abs{\sqrt P-\sqrt Q}(\sqrt P+\sqrt Q)$.

We use column vectorization:
\begin{equation}\label{eq:vec-convention}
\operatorname{vec}(G)=\sum_{i,j}G_{ij}\ket i\ket j,
\qquad
\operatorname{vec}(UGV^*)=(U\otimes\overline V)\operatorname{vec}(G).
\end{equation}
This convention fixes the transpose and conjugation choices in \cref{sec:wishart,sec:embedding}.

We use the following standard two-prior reduction.

\begin{lemma}[Two-prior principle]\label{lem:bayes}
Let $\Pi_0,\Pi_1$ be priors on states such that a real-valued functional $T$ satisfies
\[
T(\rho)\le L\quad\Pi_0\text{-a.s.},
\qquad
T(\rho)\ge H\quad\Pi_1\text{-a.s.},
\]
where $H-L>2\varepsilon$.  If the corresponding $n$-copy average states have trace distance less than $1/3$, then no estimator can estimate $T$ to additive error $\varepsilon$ with success probability at least $2/3$ on every state in the two supports.
\end{lemma}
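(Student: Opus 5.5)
We argue by contradiction: given a successful estimator, we build a test that distinguishes the two $n$-copy average states with success probability above what the trace-distance bound allows. Suppose a measurement $\{M_y\}$ on $n$ copies together with a classical estimator $\widehat T$ achieves $\Prb[\abs{\widehat T-T(\rho)}\le\varepsilon]\ge 2/3$ for every $\rho$ in $\operatorname{supp}\Pi_0\cup\operatorname{supp}\Pi_1$.

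The test thresholds the estimate at the midpoint $m=(L+H)/2$. Define the two-outcome POVM $\{E_0,E_1\}$ on $n$ copies by
\[
E_1=\sum_{y:\widehat T(y)>m}M_y,\qquad E_0=I-E_1.
\]
Since $H-L>2\varepsilon$, we have $L+\varepsilon<m<H-\varepsilon$.

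Next we check that this test succeeds pointwise on each support.
\begin{itemize}[nosep]
\item If $T(\rho)\le L$, then on the success event $\widehat T\le L+\varepsilon<m$. Hence $\tr(E_0\rho^{\otimes n})\ge 2/3$.
\item If $T(\rho)\ge H$, then $\widehat T\ge H-\varepsilon>m$. Hence $\tr(E_1\rho^{\otimes n})\ge2/3$.
\end{itemize}

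We then pass to the average states. Integrate against the priors and write $\bar\rho_i=\E_{\rho\sim\Pi_i}\rho^{\otimes n}$. By linearity of the trace (and Fubini, since the integrands are bounded measurable functions of $\rho$),
\[
\tr(E_0\bar\rho_0)\ge 2/3,\qquad \tr(E_1\bar\rho_1)\ge2/3.
\]
Therefore the equiprobable test $\{E_0,E_1\}$ distinguishes $\bar\rho_0$ from $\bar\rho_1$ with success probability at least $2/3$.

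To conclude, compare this with the Helstrom bound stated above. Every two-outcome measurement has success probability at most $\tfrac12(1+d_{\mathrm{tr}}(\bar\rho_0,\bar\rho_1))$, and the hypothesis $d_{\mathrm{tr}}(\bar\rho_0,\bar\rho_1)<1/3$ makes this strictly less than $\tfrac12\cdot\tfrac43=2/3$. This contradicts the previous step. Equivalently, $\tr(E_1(\bar\rho_1-\bar\rho_0))\le d_{\mathrm{tr}}$, whereas the previous step gives $\tr(E_1\bar\rho_1)-\tr(E_1\bar\rho_0)\ge 2/3-1/3=1/3$.

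No step is a real obstacle. The only points needing care are the strict inequality, which is what the choice of midpoint secures, and measurability of $\rho\mapsto\tr(E\rho^{\otimes n})$ so the averages make sense. That map is continuous, hence measurable.
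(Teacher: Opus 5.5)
Your proposal is correct and follows the same route as the paper: threshold the $\varepsilon$-accurate estimate at a point strictly between $L+\varepsilon$ and $H-\varepsilon$ to obtain a binary test with success at least $2/3$ under each averaged state, then invoke Helstrom's bound to force trace distance at least $1/3$. Your version simply spells out the POVM, the averaging over the priors, and the strict-inequality bookkeeping that the paper leaves implicit.
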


\begin{proof}
Thresholding an $\varepsilon$-accurate estimate between $L$ and $H$ gives a binary test with success at least $2/3$ under each prior.  Helstrom's theorem~\cite{Helstrom1976} would then force trace distance at least $1/3$.
\end{proof}

If the $X_k$ have pairwise orthogonal supports, then
\begin{equation}\label{eq:block-trace-norm}
\norm{\bigoplus_k X_k}_1=\sum_k\norm{X_k}_1.
\end{equation}

\subsection{Schur--Weyl notation}

For a partition $\lambda\vdash m$, let $f^\lambda$ be the dimension of the irreducible $S_m$-module $\mathcal P_\lambda$, let $\mathcal Q_\lambda$ be the corresponding polynomial irreducible representation of $U(r)$, and write
\[
d_\lambda=\dim\mathcal Q_\lambda.
\]
We choose Schur--Weyl unitaries so that
\begin{equation}\label{eq:schur-weyl-decomp}
(\C^r)^{\otimes m}
\cong
\bigoplus_{\substack{\lambda\vdash m\\\ell(\lambda)\le r}}
\mathcal Q_\lambda\otimes\mathcal P_\lambda.
\end{equation}
Let $\mathsf P_\lambda$ denote the corresponding central projector.  The symmetric-group action is the identity on $\mathcal Q_\lambda$ and the irreducible action on $\mathcal P_\lambda$; the $U(r)$ action is irreducible on $\mathcal Q_\lambda$ and the identity on $\mathcal P_\lambda$.

For a nonnegative vector $x=(x_1,\ldots,x_r)$, let $s_\lambda(x)$ be the Schur polynomial and write
\[
p_k(x)=\sum_{i=1}^r x_i^k,
\qquad
p_\pi(x)=\prod_{c\in\operatorname{Cycles}(\pi)}p_{|c|}(x).
\]
We use the Frobenius and Cauchy identities
\begin{align}
 s_\lambda(x)
 &=\frac1{m!}\sum_{\pi\in S_m}\chi^\lambda(\pi)p_\pi(x),
 \label{eq:frobenius}\\
 \sum_\lambda s_\lambda(x)s_\lambda(y)z^{|\lambda|}
 &=\prod_{i,j}(1-zx_iy_j)^{-1}
 =\exp\!\left(\sum_{k\ge1}\frac{z^k}{k}p_k(x)p_k(y)\right).
 \label{eq:cauchy}
\end{align}
These are formal power-series identities.  We also use
\begin{equation}\label{eq:p1-schur}
\left(\sum_i x_i\right)^m
=\sum_{\lambda\vdash m}f^\lambda s_\lambda(x).
\end{equation}
See Macdonald~\cite{Macdonald1995}.

\subsection{Gaussian concentration}

We use the following form of Gaussian isoperimetry~\cite{Ledoux2001}, stated with a universal constant to cover standard complex Gaussians whose real and imaginary parts have variance $1/2$.

\begin{lemma}[Gaussian concentration]\label{lem:gaussian-concentration}
Let $X$ be a standard Gaussian vector in a finite-dimensional real or complex Euclidean space.  If $f$ is $L$-Lipschitz, then for every $t\ge0$,
\begin{equation}\label{eq:gaussian-conc}
\Prb[f(X)\le\E f(X)-t]
\le\exp\!\left(-c\frac{t^2}{L^2}\right)
\end{equation}
for a universal constant $c>0$.
\end{lemma}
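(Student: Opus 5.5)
We prove the bound first for real standard Gaussians and then transfer it to the complex case by a change of variables. For the complex case, identify $\C^d$ with $\R^{2d}$. A standard complex Gaussian vector $X$ then has the form $X=Y/\sqrt2$, where $Y$ is a standard real Gaussian vector in $\R^{2d}$. Writing $g(y)=f(y/\sqrt2)$, the function $g$ is $L/\sqrt2$-Lipschitz for the Euclidean norm, which coincides with the Hilbert norm on $\C^d$. Also $\E g(Y)=\E f(X)$. The real statement applied to $g$ therefore gives the complex statement with constant $2c\ge c$. So it suffices to treat a real standard Gaussian $X$ in $\R^d$.

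\textbf{Reduction to a smooth, centered $f$.} Every Lipschitz $f$ has linear growth, so $f(X)$ is integrable. Replace $f$ by $f_\eta=f*\varphi_\eta$, the convolution with a small Gaussian mollifier. Then $f_\eta$ is smooth, still $L$-Lipschitz (so $\norm{\nabla f_\eta}\le L$), and satisfies $\abs{f_\eta-f}\le L\,\E\norm{\eta Z}$ uniformly. Hence both $\E f_\eta(X)\to\E f(X)$ and $\Prb[f_\eta(X)\le \E f_\eta(X)-t']\to$ the corresponding quantity for $f$ along continuity points. It therefore suffices to prove the bound for smooth $f$, and we may subtract a constant to assume $\E f(X)=0$. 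Since we need a lower tail, we will bound $\E e^{-\lambda f(X)}$, which is the upper tail of $-f$. Because $-f$ is also smooth, centered and $L$-Lipschitz, we simply prove $\E e^{\lambda f(X)}\le e^{\lambda^2\pi^2L^2/8}$ for all $\lambda\in\R$.

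\textbf{Main step: the Maurey--Pisier interpolation.} Let $Y$ be an independent copy of $X$. Since $\E f(Y)=0$, Jensen's inequality gives $\E e^{-\lambda f(Y)}\ge1$, so
\[
\E e^{\lambda f(X)}\le \E e^{\lambda(f(X)-f(Y))}.
\]
Set $X_\theta=X\sin\theta+Y\cos\theta$ and $X_\theta'=X\cos\theta-Y\sin\theta$ for $\theta\in[0,\pi/2]$. Then
\[
f(X)-f(Y)=\int_0^{\pi/2}\langle\nabla f(X_\theta),X_\theta'\rangle\,d\theta.
\]
Applying Jensen's inequality to the uniform probability measure $\frac{2}{\pi}d\theta$ yields
\[
e^{\lambda(f(X)-f(Y))}\le\frac2\pi\int_0^{\pi/2}\exp\!\bigl(\tfrac{\lambda\pi}{2}\langle\nabla f(X_\theta),X'_\theta\rangle\bigr)\,d\theta.
\]
For each fixed $\theta$, the pair $(X_\theta,X_\theta')$ has the same law as $(X,Y)$, by rotation invariance of the product Gaussian. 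Conditioning on $X_\theta$, the inner product $\langle\nabla f(X_\theta),X'_\theta\rangle$ is centered Gaussian with variance at most $L^2$. Its exponential moment is therefore at most $\exp(\lambda^2\pi^2L^2/8)$.

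\textbf{Conclusion and main obstacle.} The Chernoff bound with the optimal choice $\lambda=4t/(\pi^2L^2)$ gives
\[
\Prb[f(X)\le-t]\le\exp\!\bigl(-2t^2/(\pi^2L^2)\bigr),
\]
so \eqref{eq:gaussian-conc} holds with $c=2/\pi^2$ in the real case. The only delicate point is the smoothing and limit passage, which is routine. The rotation-invariance identity in law for $(X_\theta,X'_\theta)$ is the conceptual core. Alternatively, one could cite the Gaussian logarithmic Sobolev inequality with Herbst's argument to get $c=1/2$. No sharp constant is needed here.
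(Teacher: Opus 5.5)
Your proof is correct, but it takes a different route from the paper. The paper gives no argument of its own: it invokes Gaussian isoperimetry from Ledoux's monograph and remarks that the universal constant absorbs the variance-$1/2$ normalization of the real and imaginary parts of a standard complex Gaussian.

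You instead give a self-contained Maurey--Pisier proof. It uses symmetrization via Jensen, then the interpolation $X_\theta=X\sin\theta+Y\cos\theta$, which is an orthogonal change of variables, so $(X_\theta,X'_\theta)$ is again an i.i.d.\ standard pair. It finishes with a conditional Gaussian moment bound. All the steps check out, including the Chernoff optimization. This gives $c=2/\pi^2$ for real Gaussians and, after your rescaling $X=Y/\sqrt2$, $c=4/\pi^2$ in the paper's complex normalization.

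The cited route buys the sharp exponent ($c=1/2$ in the real normalization) and concentration about the median as well as the mean, at the price of a deep inequality. Yours is elementary and verifiable in a few lines. Its non-sharp constant is harmless, since the only use of the lemma is the lower tail of the Ginibre nuclear norm in \cref{lem:ginibre-nuclear}, which needs just some universal $c>0$.

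One small tidying point: the smoothing limit needs no continuity-point argument. Because $\abs{f_\eta-f}\le\epsilon_\eta:=L\eta\,\E\norm{Z}$ uniformly, the event $\{f(X)\le\E f(X)-t\}$ is contained in $\{f_\eta(X)\le\E f_\eta(X)-(t-2\epsilon_\eta)\}$. Applying the smooth bound with $t-2\epsilon_\eta$ (when positive) and letting $\eta\to0$ finishes directly.
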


\section{Lower-bound construction}

\subsection{Exact spectral moment matching}\label{sec:moment-twins}

We first construct two spectra with matching low-order moments and different support profiles.

\begin{proposition}[Exact moment twins]\label{prop:moment-twins}
There are universal constants
\[
0<\alpha<1,
\qquad
\beta,c_0,c_1>0,
\qquad
C_0,C_1>0,
\]
such that the following holds.  If $r$ is sufficiently large and
\begin{equation}\label{eq:K-range}
2\le K\le C_0\frac{\log r}{\log\log r},
\end{equation}
then there are vectors $a^{(0)},a^{(1)}\in\R_+^r$ satisfying
\begin{align}
\sum_i a_i^{(b)}&=1,\label{eq:sum-one}\\
\sum_i(a_i^{(0)})^k&=\sum_i(a_i^{(1)})^k,
&&1\le k\le K,\label{eq:moment-match}\\
\max_{b,i}a_i^{(b)}&\le\frac{C_1(K+2)^{M_*}}r,\label{eq:max-entry}
\end{align}
where $M_*$ is a fixed universal integer.  Moreover,
\begin{align}
\#\{i:a_i^{(0)}>0\}&\le\alpha r,
\label{eq:low-support}\\
\#\left\{i:\frac{c_0}r\le a_i^{(1)}\le\frac{c_1}r\right\}&\ge\beta r.
\label{eq:moderate-block}
\end{align}
For every fixed $\alpha_*>0$, the proposition holds with $0<\alpha\le\alpha_*$.  The constants $M_*,C_0,C_1$ may depend on $\alpha_*$ but not on $r$ or $K$, while one may retain $\beta\ge1/2$, $c_0=1/4$, and $c_1=4$.
\end{proposition}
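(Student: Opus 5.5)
The plan is to build the twins from two discrete probability measures on a bounded interval with matching moments, then quantize them into $r$ coordinates of roughly equal mass. \textbf{Step 1 (continuous twins).} Fix $\alpha_*$ and work with measures $\mu_0,\mu_1$ on $[0,B]$, where $B=B(K)$ is polynomial in $K$. A coordinate vector $a$ with entries $x_j/r$ repeated $w_j r$ times has
\[
\sum_i a_i^k=r^{1-k}\int x^k\,d\mu .
\]
So \eqref{eq:moment-match} reduces to
\[
\int x^k\,d\mu_0=\int x^k\,d\mu_1,\qquad 0\le k\le K .
\]
The case $k=0$ counts the coordinates; we let zero entries absorb any discrepancy. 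The cleanest choice puts $\mu_1$ mostly at the point $1$, which gives the moderate block \eqref{eq:moderate-block} with $c_0=1/4$, $c_1=4$ and $\beta\ge 1/2$. We take $\mu_0$ to be a mixture of $\delta_0$, with mass $\ge 1-\alpha$, and atoms at larger points.

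Equivalently, we need a signed measure $\nu=\mu_0-\mu_1$ annihilating all polynomials of degree $\le K$. I would take $\nu$ supported on the $K+2$ points
\[
0,\;1,\;t_1<\dots<t_K ,
\]
with the $t_j$ chosen at Chebyshev-like positions in $[2,B]$. The weights then come from divided differences, so that $\nu(\{x_j\})\propto 1/\prod_{l\ne j}(x_j-x_l)$. This forces alternating signs. We choose the sign so that the atom at $1$ is negative, carrying $\mu_1$'s mass, and the atom at $0$ is positive, carrying $\mu_0$'s.

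Each remaining atom goes to whichever measure its sign dictates. We then add a common measure to both so that each $\mu_b$ is a probability measure with first moment $1$, which gives \eqref{eq:sum-one} after scaling by $1/r$. This common measure must not destroy the mass structure. The large atoms $t_j$ have tiny weight, so $\mu_0$'s support outside $0$ has small mass. Choosing $B=C(K+2)^{M_*}$ with $M_*$ large enough (depending on $\alpha_*$) makes that mass $\le\alpha$, which gives \eqref{eq:low-support}. It also gives the max-entry bound \eqref{eq:max-entry} directly.

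\textbf{Step 2 (discretization with exact matching).} The main obstacle is that multiplicities $w_j r$ must be integers while the moment equalities stay exact. Here is how I would handle it. The weights are rational functions of the nodes, so I keep the nodes as free real parameters. I round the multiplicities $n_j\in\mathbb N$ first, and only then solve for the atom locations.

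Concretely, fix integer multiplicities for $\mu_1$: $\lceil r/2\rceil$ copies of value $1$ plus the rest. For $\mu_0$, pick integer multiplicities close to $w_j r$, and solve the $K$ equations
\[
\sum_j n_j x_j^k=\text{target}_k,\qquad 1\le k\le K,
\]
for the positions $x_1,\dots,x_K$ near the $t_j$. The Jacobian is $\bigl(k\,n_j x_j^{k-1}\bigr)$, a scaled Vandermonde matrix with nodes separated by $\gtrsim 1$. It is therefore invertible, with a quantitative bound of order $B^{O(K^2)}$. Rounding perturbs the targets by $O(B^k)$, a relative error $O(1/r)$. A quantitative inverse function theorem (Newton–Kantorovich) then gives a solution provided
\[
B^{O(K^2)}\ll r .
\]
With $B=(K+2)^{M_*}$ this reads $K^2\log K\lesssim\log r$. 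This is where the range $K\le C_0\log r/\log\log r$ should enter, possibly after a sharper Vandermonde conditioning bound using Chebyshev nodes, which reduces the exponent to $O(K\log K)$. That step, controlling the conditioning so that it matches exactly the stated range \eqref{eq:K-range}, is where I expect the real work.

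\textbf{Step 3 (normalization and padding).} Finally, I pad each vector with zeros to length $r$, counting total lengths so that both fit. Then I check that $\sum_i a_i=1$ holds exactly, since it is the $k=1$ equation, imposed as one of the solved constraints. The constants $C_0,C_1,M_*$ depend only on $\alpha_*$ through the choice of $B$.
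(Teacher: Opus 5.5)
The gap is in Step 1, at the normalization $\sum_i a_i=1$. You infer the moderate block \eqref{eq:moderate-block} from ``$\mu_1$ is mostly at $1$'', but what decides it is the common \emph{first moment}. The far atoms have small mass, yet they sit at locations up to $B$. After the final common rescaling, the atom at $1$ lands at roughly $1/\bar m$, where $\bar m$ is the ratio of the common first moment to the mass at $1$, and nothing in your construction keeps $\bar m=O(1)$.

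To see this concretely, take the divided-difference weights $A_x=\prod_{y\ne x}(x-y)^{-1}$ on the nodes $0,1,t_1<\dots<t_K$. One checks that $t_jA_{t_j}/A_1=-\ell_j(1)$, where $\ell_j$ are the Lagrange basis polynomials of $t_1,\dots,t_K$. Since $\sum_x xA_x=0$, each sign class has first moment $\tfrac12|A_1|\bigl(1+\sum_j|\ell_j(1)|\bigr)$. Hence $\bar m=\tfrac12(1+\Lambda(1))$, with $\Lambda$ the Lebesgue function of the auxiliary nodes.

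Now take Chebyshev nodes mapped to $[2,B]$. Each $|\ell_j(x)|$ increases as $x$ decreases below $t_1$, so $\Lambda(1)\ge\Lambda(2)=\tfrac{2}{\pi}\log K+O(1)$. Only $O(\alpha r)$ moderate entries of $a^{(1)}$ can come from anything other than the atom at $1$, because anything common also lies in the support of $a^{(0)}$. The entries from the atom at $1$ have total sum at most $1/\bar m$. Thus $\beta c_0\lesssim 1/\log K\to0$ over the range \eqref{eq:K-range}, and adding a common measure cannot repair this.

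The missing idea is to choose the auxiliary nodes for this Lebesgue-type quantity, not for Vandermonde conditioning. They must be geometrically spread so that $\Lambda(1)=1+o(1)$. The paper takes $x_j=j^M$. There the node $2^M$, with relative weight about $2^{-M}$, carries first moment about $1$, matching the atom at $1$. Meanwhile $\sum_{j\ge3}x_jR_j=O((2/3)^M)$, which gives mean $1+O((2/3)^M)$ (\cref{lem:endpoint-concentration}).

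Step 2 is also not carried out. As you note, your $B^{O(K^2)}$ bound only yields $K\lesssim\sqrt{\log r/\log\log r}$, and the sharper bound is left open. With the correct, geometrically spread nodes it gets harder still. The far atoms have super-exponentially small multiplicities $n_j$, which enter $J^{-1}$ as $1/n_j$. Also, $\mu_0$ alone carries only about $K/2$ of them, so you would have to move atoms of both measures.

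The paper decouples the correction from the barycentric atoms entirely. It mixes both measures with $K$ control packets of mass $\eta/K$ at $t_j=1+j/(K+1)$, rounds everything else, and moves only the $b=1$ packets. The Jacobian is then $w\operatorname{diag}(1,\dots,K)V$, with $V$ a Vandermonde matrix on nodes in $(1,2)$. Its inverse has $\ell_\infty$ operator norm at most $\eta^{-1}(CK)^K$, independent of $B$, while the rounding error is only $O(KB^K/r)$. A contraction argument then needs just $r\ge\eta^{-1}B^K(CK)^{5K}$, which is exactly the range $K\lesssim\log r/\log\log r$. The packets add $\eta$ to the nonzero fraction of $a^{(0)}$, and this is absorbed by taking $\eta$ small.
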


We split the proof into a continuous barycentric construction and an exact equal-weight correction.

\subsubsection{Barycentric moment measures}

Fix integers $K\ge2$ and $M\ge6$.  Set
\[
x_j=j^M,
\qquad 0\le j\le K+1,
\]
and define barycentric weights
\begin{equation}\label{eq:bary-weights}
w_j=\left(\prod_{\substack{0\le\ell\le K+1\\\ell\ne j}}(x_j-x_\ell)\right)^{-1}.
\end{equation}

\begin{lemma}[Barycentric cancellation]\label{lem:bary-cancel}
For every polynomial $P$ of degree at most $K$,
\[
\sum_{j=0}^{K+1} w_j P(x_j)=0.
\]
The signs of the $w_j$ alternate.  Consequently, the normalized positive and negative parts of the signed measure $\sum_j w_j\delta_{x_j}$ are probability measures with identical moments through degree $K$.
\end{lemma}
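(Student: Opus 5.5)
The plan is to identify $\sum_j w_j P(x_j)$ with a divided difference and then read off the sign pattern directly from the definition of the weights.

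\textbf{Cancellation.} For any function $g$ on the nodes, the divided difference over the $K+2$ distinct nodes $x_0<x_1<\cdots<x_{K+1}$ is
\[
g[x_0,\ldots,x_{K+1}]=\sum_{j=0}^{K+1}\frac{g(x_j)}{\prod_{\ell\ne j}(x_j-x_\ell)}=\sum_j w_j g(x_j).
\]
This is the leading coefficient of the Lagrange interpolating polynomial of degree at most $K+1$ through the points $(x_j,g(x_j))$.

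Now take $g=P$ with $\deg P\le K$. The interpolating polynomial is $P$ itself, by uniqueness of interpolation on $K+2$ distinct nodes. Its coefficient of $x^{K+1}$ is therefore zero, which gives $\sum_j w_jP(x_j)=0$.

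If we want to avoid quoting interpolation theory, there is a direct route. Apply the partial-fraction (residue) identity $\sum_j x_j^k/\prod_{\ell\ne j}(x_j-x_\ell)=0$ for $0\le k\le K$. This follows by writing the rational function $z^k/\prod_\ell(z-x_\ell)$ in partial fractions and comparing behaviour at infinity: the function decays like $z^{k-K-2}$, so the sum of its residues vanishes.

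\textbf{Alternating signs.} The nodes $x_j=j^M$ are strictly increasing in $j$. In the product $\prod_{\ell\ne j}(x_j-x_\ell)$, exactly the $K+1-j$ factors with $\ell>j$ are negative. Hence
\[
\operatorname{sign}(w_j)=(-1)^{K+1-j},
\]
so the signs alternate. In particular both signs occur, because there are at least two nodes.

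\textbf{Moment-matching measures.} Let $\mu=\sum_j w_j\delta_{x_j}$. Taking $P\equiv1$ in the cancellation identity gives $\mu(\mathbb R)=0$. Write $\mu=\mu_+-\mu_-$, where $\mu_+$ and $\mu_-$ are the positive measures supported on the nodes with $w_j>0$ and $w_j<0$ respectively. Then
\[
Z:=\mu_+(\mathbb R)=\mu_-(\mathbb R)>0,
\]
and $Z$ is positive because both signs occur. The measures $\mu_\pm/Z$ are therefore probability measures.

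For $0\le k\le K$, apply the cancellation identity to $P(x)=x^k$. This gives $\int x^k\,d\mu_+=\int x^k\,d\mu_-$, and dividing by $Z$ shows that the normalized measures have identical moments through degree $K$.

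There is no real obstacle in this argument. The only points needing care are the sign count and the positivity of $Z$, and both follow from the strict monotonicity of $j\mapsto j^M$.
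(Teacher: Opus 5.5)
Your proof is correct and follows the paper's argument. You read $\sum_j w_jP(x_j)$ as the $z^{K+1}$ coefficient of the Lagrange interpolant, which vanishes for $\deg P\le K$; you count the $K+1-j$ negative factors to get $\operatorname{sign}(w_j)=(-1)^{K+1-j}$; and you use the degree-zero identity for equal masses. The residue-at-infinity alternative and your explicit check that the common mass $Z$ is positive are harmless additions that the paper leaves implicit.
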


\begin{proof}
In the Lagrange interpolation formula for $P$, the coefficient of $z^{K+1}$ on the left is zero, while the coefficient on the right is $\sum_j P(x_j)w_j$.  Since the nodes are strictly increasing, the sign of $w_j$ is $(-1)^{K+1-j}$.
The degree-zero identity says that the total positive and negative masses agree; the remaining identities give equality of the normalized moments.
\end{proof}

The power-spaced nodes concentrate the two measures at opposite endpoints, controlling both the rank defect and the well-conditioned block in the hard family.

\begin{lemma}[Endpoint concentration]\label{lem:endpoint-concentration}
There is a universal constant $C$ such that, for every sufficiently large fixed integer $M$ and every $K\ge2$, the two probability measures from \cref{lem:bary-cancel} can be labeled $\nu_0,\nu_1$ so that
\begin{align}
\nu_0(\{0\})&\ge1-C(2/3)^M,
\label{eq:nu0-zero}\\
\nu_1(\{1\})&\ge1-C(2/3)^M.
\label{eq:nu1-one}
\end{align}
Their common first moment $\mu_{K,M}$ satisfies
\begin{equation}\label{eq:mean-near-one}
\abs{\mu_{K,M}-1}\le C(2/3)^M.
\end{equation}
After the common rescaling $x\mapsto x/\mu_{K,M}$, the measures have mean one, support in $[0,2(K+1)^M]$, the first measure retains the atom at zero from \eqref{eq:nu0-zero}, and the second has mass at least $1-C(2/3)^M$ at a point in $[1/2,2]$.
\end{lemma}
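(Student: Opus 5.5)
We prove this by estimating the barycentric weights directly. Write $N=K+1$ and $x_j=j^M$. Then
\[
|w_j|^{-1}=\prod_{\ell\ne j}|j^M-\ell^M|.
\]
The two probability measures are $\nu_\pm=|w_j|\,\mathbf 1[\operatorname{sign} w_j=\pm]/Z$, where $Z=\sum_{w_j>0}w_j=\sum_{w_j<0}|w_j|$. We label $\nu_0$ as the measure containing the node $j=0$ and $\nu_1$ as the one containing $j=1$. Since the signs alternate by \cref{lem:bary-cancel}, these two nodes lie in opposite measures.

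The key quantitative claim is that $|w_0|$ dominates $|w_j|$ for all even $j\ge2$, and $|w_1|$ dominates $|w_j|$ for all odd $j\ge3$.

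We handle the node $0$ first. We have $|w_0|^{-1}=\prod_{\ell=1}^N \ell^M=(N!)^M$. For $j\ge2$,
\[
|w_j|^{-1}=\prod_{\ell<j}(j^M-\ell^M)\prod_{\ell>j}(\ell^M-j^M).
\]
Each factor with $\ell<j$ is at least $j^M(1-((j-1)/j)^M)$, and each factor with $\ell>j$ is at least $\ell^M(1-(j/(j+1))^M)$. Therefore
\[
\frac{|w_j|}{|w_0|}\le \frac{(N!)^M}{j^{Mj}\prod_{\ell>j}\ell^M}\prod(1-\theta)^{-1}=\frac{(j!)^M}{j^{Mj}}\prod(1-\theta)^{-1}.
\]
Here $(j!)^M/j^{Mj}\le (2/4)^M$ for $j=2$, and this ratio decays even faster for larger $j$. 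The problem is the correction product $\prod(1-\theta)^{-1}$. When $M$ is large relative to $j$, each factor is close to $1$. But for $j$ comparable to $M$ or larger, $((j-1)/j)^M$ is not small.

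This is the main obstacle. I would handle it by not splitting the product into factors. Instead, bound it directly by $\prod_{\ell<j}(j^M-\ell^M)\ge \prod_{\ell<j}(j-\ell)j^{M-1}$, using $j^M-\ell^M\ge (j-\ell)j^{M-1}$. This gives
\[
|w_j|/|w_0|\le \frac{(j!)^M}{(j-1)!\,j^{(M-1)(j)}}\cdot\prod_{\ell>j}\frac{\ell^M}{(\ell-j)\ell^{M-1}}.
\]
The $\ell>j$ part is $\prod_{\ell>j}\ell/(\ell-j)=\binom{N}{j}$, which is at most $2^N$. This would spoil uniformity in $K$. So for $\ell>j$ I would instead keep $\ell^M-j^M\ge \ell^M(1-(j/\ell)^M)$ and note that $\prod_{\ell>j}(1-(j/\ell)^M)^{-1}$ is bounded by a universal constant once $M\ge6$. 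The reason is that $\sum_{\ell>j}(j/\ell)^M$ is $O(j/M)\cdot(j/(j+1))^M$, and we must combine this with the super-exponential gain $(j!/j^j)^M\approx e^{-jM}$.

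For $\ell<j$ the analogous product $\prod_{\ell<j}(1-(\ell/j)^M)^{-1}$ is at most $e^{O(j/M)}\cdot M$, by comparison with $\int_0^1(1-t^M)^{-1}$-type sums. This is again absorbed by $e^{-jM}$. Summing over even $j\ge2$ then gives $\sum_j|w_j|/|w_0|\le C(1/2)^M$.

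Node $1$ is handled the same way with base $|w_1|^{-1}=\prod_{\ell\ne1}|1-\ell^M|$. For odd $j\ge3$, the leading ratio becomes roughly $\prod_{\ell<j}\ell^M/(j^M)^{j-1}$ times the corresponding factor for $\ell=0$, and the worst case $j=3$ gives $(2/3)^M$ up to constants. This explains the base $2/3$ in \eqref{eq:nu0-zero}--\eqref{eq:nu1-one}. Normalizing then yields $\nu_0(\{0\})\ge1-C(2/3)^M$ and $\nu_1(\{1\})\ge1-C(2/3)^M$.

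For the mean, we use that the first moments agree, since $K\ge2\ge1$, and compute from $\nu_1$:
\[
\mu=\nu_1(\{1\})\cdot1+\sum_{j\text{ odd}\ge3}\nu_1(j)j^M.
\]
The tail has extra factor $j^M$, so here we need the sharper statement $|w_j|j^M/|w_1|\le C(2/3)^M$ summably. Indeed the ratio estimate already contains the factor $(j-1)!^M/j^{M(j-1)}$, and multiplying by $j^M$ still leaves $((j-1)!/j^{j-2})^M$. At $j=3$ this is $(2/3)^M$, and it is smaller for larger $j$. This gives \eqref{eq:mean-near-one}.

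Finally, the rescaling claims follow directly. Dividing by $\mu\in[1/2,2]$ keeps the atom at $0$ and moves the atom at $1$ into $[1/2,2]$. It also maps the support $[0,(K+1)^M]$ into $[0,2(K+1)^M]$. Moment equality is preserved because it is homogeneous in each degree.
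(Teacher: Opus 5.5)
Your route is the paper's: compare each weight with $|w_0|$ (or $|w_1|$), use $j^M-\ell^M\ge(j-\ell)j^{M-1}$ for $\ell<j$ to get the super-exponential factor $((j-1)!/j^{j-1})^{M-1}$, keep the $\ell>j$ factors as $B_{j,K}=\prod_{\ell>j}(1-(j/\ell)^M)^{-1}$, and control the weighted tail through $s_j=(j-1)!/j^{j-2}$ with $s_3=2/3$. However, the step you single out as the main obstacle, namely bounding $B_{j,K}$ uniformly in $K$, is not closed as written.

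\textbf{The bound on $B_{j,K}$.} You claim $B_{j,K}$ is bounded by a universal constant once $M\ge6$. This is false in the relevant regime: $M$ is fixed while $K$ is unbounded. For $j\gg M$ the single factor $\ell=j+1$ is already $(1-(j/(j+1))^M)^{-1}\approx j/M$. Your justification also does not work. Passing from your estimate of $\sum_{\ell>j}(j/\ell)^M$ to a bound on $\prod_{\ell>j}(1-t_\ell)^{-1}$ does not follow termwise from $-\log(1-t)\lesssim t$, which fails exactly when $t_\ell$ is close to $1$.

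The paper closes this as follows. Expand $-\log(1-t)=\sum_n t^n/n$ fully and apply the integral test to each power, $\sum_{\ell>j}(j/\ell)^{Mn}\le j/(Mn-1)$. This gives $\log B_{j,K}\le \pi^2 j/(3M)$. That exponential growth in $j$ is absorbed by $s_j^{M-1}$, and $s_{j+1}/s_j\le 9/16$ for $j\ge3$ makes $\sum_{j\ge3}j^M R_j$ geometric and dominated by $j=3$ (here $R_j=|w_j|/|w_0|$). You do say the growth ``must be combined with the super-exponential gain,'' so the idea is present. But the bound you state and its justification need to be replaced by this one.

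\textbf{The source of $2/3$.} The mass ratio at $j=3$ is $|w_3|/|w_1|\approx(2/9)^M$, not $(2/3)^M$. The base $2/3$ enters only through the $x_j=j^M$ weighting in the mean, and in the paper's bookkeeping also through the $\ell=3$ correction $|2^MR_2-1|\le C(2/3)^M$. So your mass bounds are actually stronger than you state, but your explanation of the base is wrong.

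\textbf{The stray factor $M$.} Your alternative bound $\prod_{\ell<j}(1-(\ell/j)^M)^{-1}\le e^{O(j/M)}M$ would, if used at $j=3$ in the mean estimate, give $M(2/3)^M$ and destroy the universal constant $C$. (Also, $\int_0^1(1-t^M)^{-1}\,dt$ diverges, so that comparison is not the right one.) Either note that this product equals $1+O((2/3)^M)$ at $j=3$, or simply use the $(j-\ell)j^{M-1}$ bound you already introduced. That bound needs no correction on the $\ell<j$ side and yields $j^MR_j\le j\,s_j^{M-1}e^{\pi^2 j/(3M)}$, exactly as in the paper.
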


\begin{proof}
Let
\[
R_j=\frac{|w_j|}{|w_0|}.
\]
For $j=1$,
\[
R_1=\prod_{\ell=2}^{K+1}(1-\ell^{-M})^{-1},
\]
so, for large $M$,
\begin{equation}\label{eq:R1}
\abs{R_1-1}\le C2^{-M}.
\end{equation}
Similarly,
\[
2^M R_2=(1-2^{-M})^{-1}
\prod_{\ell=3}^{K+1}\left(1-(2/\ell)^M\right)^{-1},
\]
and therefore
\begin{equation}\label{eq:R2}
\abs{2^M R_2-1}\le C(2/3)^M.
\end{equation}
Both claims follow from $-\log(1-t)\le2t$ for $0\le t\le1/2$.

For $j\ge2$, write
\[
R_j=A_j B_{j,K},
\]
where
\[
A_j=\prod_{\ell=1}^{j-1}\frac{\ell^M}{j^M-\ell^M},
\qquad
B_{j,K}=\prod_{\ell=j+1}^{K+1}\frac{\ell^M}{\ell^M-j^M}.
\]
Since
\[
j^M-\ell^M\ge(j-\ell)j^{M-1}
\qquad(\ell<j),
\]
we obtain
\begin{equation}\label{eq:Aj}
A_j\le
\left(\frac{(j-1)!}{j^{j-1}}\right)^{M-1}.
\end{equation}
On the other hand, expanding $-\log(1-t)$ into its positive power series and using the integral test gives
\begin{align}
\log B_{j,K}
&\le\sum_{n\ge1}\frac{j^{Mn}}n
       \sum_{\ell=j+1}^\infty\ell^{-Mn}\notag\\
&\le j\sum_{n\ge1}\frac1{n(Mn-1)}
\le\frac{\pi^2}{3M}j.
\label{eq:Bj}
\end{align}
Define
\[
s_j=\frac{(j-1)!}{j^{j-2}}.
\]
Then \eqref{eq:Aj}--\eqref{eq:Bj} imply
\begin{equation}\label{eq:weighted-Rj}
j^MR_j
\le
j s_j^{M-1}\exp\!\left(\frac{\pi^2j}{3M}\right).
\end{equation}
Now $s_3=2/3$, and
\[
\frac{s_{j+1}}{s_j}
=\left(\frac{j}{j+1}\right)^{j-1}
\le\frac9{16},
\qquad j\ge3.
\]
Thus
\[
s_j\le \frac23\left(\frac9{16}\right)^{j-3},
\qquad j\ge3.
\]
Substituting this into \eqref{eq:weighted-Rj}, the ratio of consecutive terms in the resulting majorant is at most
\[
\left(\frac9{16}\right)^{M-1}
\exp\left(\frac{\pi^2}{3M}\right)\frac{j+1}{j},
\]
which is at most $1/2$ for every $j\ge3$ once the fixed integer $M$ is sufficiently large.  Therefore the geometric sum gives
\begin{equation}\label{eq:tail-weighted}
\sum_{j=3}^{K+1}j^MR_j\le C(2/3)^M,
\end{equation}
uniformly in $K$.  Since $j^M\ge1$, the same bound applies to the unweighted tail $\sum_{j\ge3}R_j$.

The two sign classes contain $j=0$ and $j=1$, respectively.  By \eqref{eq:R1}, \eqref{eq:R2}, and \eqref{eq:tail-weighted}, their common total mass in units of $|w_0|$ is $1+O((2/3)^M)$, proving \eqref{eq:nu0-zero} and \eqref{eq:nu1-one}.  Computing the common mean on the sign class containing $j=1$ gives \eqref{eq:mean-near-one}; the rescaling claims follow.
\end{proof}

\subsubsection{Exact equal-weight correction}

The measures in \cref{lem:endpoint-concentration} have unequal atom weights.  The next lemma converts them to equal-weight empirical measures while preserving the first $K$ moments.

\begin{lemma}[Quantitative packet correction]\label{lem:packet-correction}
Fix $M\ge6$ and $0<\eta<1/10$.  Let $\bar\nu_0,\bar\nu_1$ be probability measures supported on at most $K+2$ points in $[0,B]$, with identical moments through degree $K$.  Let
\[
t_j=1+\frac{j}{K+1},
\qquad 1\le j\le K,
\]
and define
\[
\mu_b=(1-\eta)\bar\nu_b+\frac\eta K\sum_{j=1}^K\delta_{t_j}.
\]
There is a universal constant $C_{\mathrm{pc}}\ge10$ such that the following holds whenever
\begin{equation}\label{eq:packet-explicit-size}
r\ge \eta^{-1}B^K(C_{\mathrm{pc}}K)^{5K}
\qquad\text{and}\qquad
r\ge\frac{4K}{\eta}.
\end{equation}
There exist multisets
\[
X_b=\{x_1^{(b)},\ldots,x_r^{(b)}\}\subset[0,\max\{B,5/2\}]
\]
whose empirical moments agree exactly through degree $K$:
\begin{equation}\label{eq:empirical-match}
\frac1r\sum_{i=1}^r(x_i^{(0)})^k
=
\frac1r\sum_{i=1}^r(x_i^{(1)})^k,
\qquad 0\le k\le K.
\end{equation}
Every non-control atom receives a number of points differing from its target $r\mu_b$-mass by at most $K+2$.  Only the $b=1$ control packets are moved; writing their displacements as $\delta_1,\ldots,\delta_K$, one has
\begin{equation}\label{eq:packet-displacement-bound}
\norm\delta_\infty
\le
2\eta^{-1}(C_{\mathrm{pc}}K)^K\frac{K B^K}{r}
\le\frac12,
\end{equation}
so every adjusted control point remains in $[1/2,5/2]$.

In particular, if $B\le C_B(K+2)^M$ for a fixed $C_B$, then for fixed $M,\eta,C_B$ the size condition \eqref{eq:packet-explicit-size} holds whenever
\begin{equation}\label{eq:packet-K-condition}
K\le c_{M,\eta,C_B}\frac{\log r}{\log\log r}
\end{equation}
for a positive constant $c_{M,\eta,C_B}$.
\end{lemma}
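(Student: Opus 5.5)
We propose to round each mixture $r\mu_b$ to integer multiplicities, and then repair the resulting moment discrepancy by moving only the $K$ control packets of the $b=1$ configuration. The correction then reduces to solving a well-conditioned Vandermonde-type system.

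\emph{Step 1: rounding.} For each $b$ and each atom $y$ of $\bar\nu_b$, place $\lfloor r(1-\eta)\bar\nu_b(\{y\})\rfloor$ points at $y$. The leftover mass is fewer than $K+2$ points per configuration; we add these to the control packets. Concretely, for each $b$ we choose integers $m^{(b)}_j\approx r\eta/K$ of points at $t_j$ so that the total is exactly $r$. The condition $r\ge 4K/\eta$ guarantees $r\eta/K\ge4$, so this redistribution keeps every packet count positive and close to target. We choose the counts $m_j^{(0)}=m_j^{(1)}=:m_j$ to be the same for both $b$; this is possible because both configurations have total mass $r$ and at most $K+2$ atoms, and we can absorb the difference in rounding loss in a non-control atom, with discrepancy at most $K+2$ there.

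\emph{Step 2: moment discrepancy.} Let
\[
E_k=\sum_i (x_i^{(0)})^k-\sum_i (x_i^{(1)})^k
\]
be computed before any movement. The exact moment match of $\bar\nu_0,\bar\nu_1$ and the identical control packets cancel everything except the rounding errors. Hence $|E_k|\le 2(K+2)B^k$ for $1\le k\le K$, and $E_0=0$.

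\emph{Step 3: moving the control packets.} We now move the $b=1$ packet at $t_j$ to $t_j+\delta_j$ and must solve
\[
\sum_{j=1}^K m_j\bigl((t_j+\delta_j)^k-t_j^k\bigr)=E_k,\qquad 1\le k\le K.
\]
The Jacobian at $\delta=0$ is $J=\bigl(k\,m_j t_j^{k-1}\bigr)_{k,j}$, which is $\operatorname{diag}(k)\cdot V\cdot\operatorname{diag}(m_j)$ with $V$ the Vandermonde matrix on the nodes $t_j\in[1,2]$, whose spacing is $1/(K+1)$. We bound the inverse by the explicit Lagrange-basis formula:
\[
\|V^{-1}\|_{\infty}\le\max_j\prod_{\ell\ne j}\frac{1+|t_\ell|}{|t_j-t_\ell|}\le (CK)^K,
\]
since the numerators are at most $3$ and the denominators satisfy $\prod_{\ell\ne j}|t_j-t_\ell|\ge j!(K-j)!/(K+1)^{K-1}\ge (K+1)^{-K}e^{-O(K)}$. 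Using $m_j\ge r\eta/(2K)$ and $|E_k|\le 2(K+2)B^K$, the linearized solution then satisfies $\|\delta\|_\infty\lesssim \eta^{-1}(CK)^K KB^K/r$.

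\emph{Step 4: the exact nonlinear solve.} This is the main obstacle. We want an exact solution, not just a linearized one, with the constant controlled by $(C_{\mathrm{pc}}K)^{5K}$. The plan is a contraction, or equivalently Newton–Kantorovich, argument on the map $\delta\mapsto J^{-1}(E-N(\delta))$, where $N$ collects the higher-order terms $\sum_j m_j\sum_{i\ge2}\binom ki t_j^{k-i}\delta_j^i$. On the ball $\|\delta\|_\infty\le\rho$ with $\rho\le1/2$, we have $|N_k|\lesssim m\,k^2 3^k\rho^2$ and the Lipschitz constant is $\lesssim m k^2 3^k\rho$. Multiplying by $\|J^{-1}\|\lesssim (CK)^K K/m$, the factors of $m$ cancel. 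Contraction therefore needs $\rho\,(CK)^{K+O(1)}3^K<1/2$, which holds when $\rho\approx 2\eta^{-1}(CK)^KKB^K/r$ and $r\ge\eta^{-1}B^K(C_{\mathrm{pc}}K)^{5K}$; the slack in the exponent $5K$ absorbs all the $3^K$, $K^2$, and $(CK)^K$ factors. An alternative avoiding contraction is Brouwer's theorem on the same ball. The fixed point yields \eqref{eq:empirical-match} with $k=0$ automatic, gives \eqref{eq:packet-displacement-bound}, and keeps the points in $[1/2,5/2]$, since $t_j\in[1,2]$ and $|\delta_j|\le1/2$.

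\emph{Step 5: the final claim.} Taking logarithms, the size condition \eqref{eq:packet-explicit-size} with $B\le C_B(K+2)^M$ requires
\[
\log r\ge \log(1/\eta)+K\bigl(M\log(K+2)+\log C_B\bigr)+5K\log(C_{\mathrm{pc}}K)=O_{M,\eta,C_B}(K\log K).
\]
This holds when $K\le c\log r/\log\log r$ with $c$ small, because $K\log K\le c\log r\,(1+o(1))$; the second condition $r\ge4K/\eta$ is then trivially satisfied.
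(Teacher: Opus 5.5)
Your proposal is correct and follows essentially the paper's route. You round to integer counts with identical control packets, bound the inverse Jacobian in $\|\cdot\|_{\infty\to\infty}$ through the Lagrange coefficients of the Vandermonde system on the nodes $t_j$, and solve the nonlinear system exactly by a Newton-type contraction on an $\ell_\infty$ ball, moving only the $b=1$ packets. The remaining differences are cosmetic and absorbed into $C_{\mathrm{pc}}$: the paper fixes exactly $N=\lfloor\eta r/K\rfloor$ points per packet, rounds the remaining $r-KN$ points over the base atoms, and uses the radius $\rho=(C_1K)^{-3K}$ rather than your $\rho\approx 2L_0$. Your two small slips are also harmless: the node-gap product should have $(j-1)!$ rather than $j!$, and the constant in your bound on $|E_k|$ is slightly larger than $2(K+2)$ once a single base atom absorbs the rounding difference.
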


\begin{proof}
Set
\[
N=\left\lfloor\frac{\eta r}{K}\right\rfloor,
\qquad
R=r-KN,
\qquad
w=\frac Nr.
\]
The second inequality in \eqref{eq:packet-explicit-size} gives
\begin{equation}\label{eq:w-bounds}
\frac{\eta}{2K}\le w\le\frac\eta K,
\qquad
\abs{R-(1-\eta)r}<K.
\end{equation}
For each $b$, allocate exactly $N$ points at every $t_j$.  For the remaining $R$ points, round the atom weights of $\bar\nu_b$ to integers $n_{b,\ell}$ summing to $R$, with
\[
\abs{n_{b,\ell}-R\bar\nu_b(\{z_{b,\ell}\})}<1.
\]
This is obtained by flooring all target counts and distributing the remaining points among the largest fractional parts.  In view of \eqref{eq:w-bounds}, the resulting non-control count differs from the target $r(1-\eta)\bar\nu_b(\{z_{b,\ell}\})$ by at most $K+1$.

Let $e_k$ be the $b=1$ empirical $k$th moment minus its $b=0$ counterpart.  The control contributions cancel, as do the ideal $R$-point base contributions.  Since each measure has at most $K+2$ base atoms,
\begin{equation}\label{eq:rounding-error}
\abs{e_k}\le\frac{2(K+2)B^k}{r}
\le\frac{4KB^K}{r},
\qquad 1\le k\le K,
\end{equation}
where the harmless case $B<1$ may be absorbed by replacing $B$ with $\max\{B,1\}$.

Move all $N$ points in the $j$th control packet for $b=1$ from $t_j$ to $t_j+\delta_j$.  We must solve
\begin{equation}\label{eq:nonlinear-system}
G_k(\delta)
:=w\sum_{j=1}^K\bigl((t_j+\delta_j)^k-t_j^k\bigr)
=-e_k,
\qquad 1\le k\le K.
\end{equation}
The Jacobian at the origin is
\begin{equation}\label{eq:jacobian}
A_{k,j}=wk t_j^{k-1}.
\end{equation}
Let $V_{k,j}=t_j^{k-1}$.  The coefficients of the Lagrange polynomial
\[
L_j(t)=\prod_{\ell\ne j}\frac{t-t_\ell}{t_j-t_\ell}
\]
form a row of $V^{-1}$ up to transpose.  Since $1<t_\ell<2$ and
\[
\prod_{\ell\ne j}\abs{t_j-t_\ell}
=\frac{(j-1)!(K-j)!}{(K+1)^{K-1}},
\]
one has
\begin{align}
\sum_q\abs{[t^q]L_j(t)}
&\le
\frac{3^{K-1}(K+1)^{K-1}}{(j-1)!(K-j)!}
\le(CK)^K.\label{eq:lagrange-row}
\end{align}
Because $A=w\,\operatorname{diag}(1,\ldots,K)V$ and $w\ge\eta/(2K)$,
\begin{equation}\label{eq:A-inverse}
\norm{A^{-1}}_{\infty\to\infty}
\le\eta^{-1}(C K)^K.
\end{equation}
Combining \eqref{eq:rounding-error} and \eqref{eq:A-inverse} gives
\begin{equation}\label{eq:linear-displacement}
L_0:=\norm{A^{-1}e}_\infty
\le
\eta^{-1}(C K)^K\frac{K B^K}{r}.
\end{equation}

For $\norm\delta_\infty\le\rho\le1/2$, the mean-value theorem and $wK\le\eta$ give
\begin{align}
\norm{DG(\delta)-A}_{\infty\to\infty}
&\le
\max_{1\le k\le K}
\sum_{j=1}^K wk(k-1)3^{k-2}\abs{\delta_j}\notag\\
&\le C\eta K^2 3^K\rho.
\label{eq:jacobian-variation}
\end{align}
Choose
\[
\rho=(C_1K)^{-3K}
\]
with a sufficiently large universal $C_1$.  Equations \eqref{eq:A-inverse} and \eqref{eq:jacobian-variation} then imply, for every $K\ge2$,
\begin{equation}\label{eq:contraction-derivative}
\sup_{\norm\delta_\infty\le\rho}
\norm{I-A^{-1}DG(\delta)}_{\infty\to\infty}
\le\frac12.
\end{equation}
After increasing the universal constant $C_{\mathrm{pc}}$ in \eqref{eq:packet-explicit-size}, that condition and \eqref{eq:linear-displacement} also give
\begin{equation}\label{eq:linear-inside-ball}
L_0\le\frac\rho2.
\end{equation}
\Cref{eq:linear-inside-ball} follows from
$r\ge\eta^{-1}B^K(CK)^{4K+1}$, which is weaker than \eqref{eq:packet-explicit-size} after changing $C_{\mathrm{pc}}$.

Define the Newton map
\[
\mathcal T(\delta)
=-A^{-1}e+\delta-A^{-1}G(\delta).
\]
By \eqref{eq:contraction-derivative}, $\mathcal T$ is $1/2$-Lipschitz on the closed $\ell_\infty$ ball of radius $\rho$.  Since $\mathcal T(0)=-A^{-1}e$, \eqref{eq:linear-inside-ball} implies
\[
\norm{\mathcal T(\delta)}_\infty
\le L_0+\frac12\norm\delta_\infty
\le\rho.
\]
Thus $\mathcal T$ is a contraction from the ball to itself, and its fixed point solves \eqref{eq:nonlinear-system}.  At this point,
\[
\norm\delta_\infty
\le L_0+\frac12\norm\delta_\infty,
\]
so $\norm\delta_\infty\le2L_0$, which is \eqref{eq:packet-displacement-bound}.  Since also $\norm\delta_\infty\le\rho<1/2$, all adjusted control points remain in $[1/2,5/2]$.

Finally, if $B\le C_B(K+2)^M$, the logarithm of the right-hand side in \eqref{eq:packet-explicit-size} is at most
\[
O_{M,C_B}(K\log K)+\log(1/\eta).
\]
For fixed $M,\eta,C_B$, this is at most $\log r$ whenever \eqref{eq:packet-K-condition} holds with a sufficiently small $c_{M,\eta,C_B}>0$.
\end{proof}

\begin{proof}[Proof of \cref{prop:moment-twins}]
Choose a universal $M_*$ so that the tail $C(2/3)^{M_*}$ in \cref{lem:endpoint-concentration} is sufficiently small.  Apply its mean-one rescaling, writing $\bar\nu_0,\bar\nu_1$ for the resulting measures.  Fix a small universal $\eta>0$ and apply \cref{lem:packet-correction}.

Write $\delta_M=C(2/3)^{M_*}$.  Take $M_*$ large enough that $\delta_M\le1/20$, and take the fixed control mass $\eta\le1/20$.  Before the final common normalization, the first empirical multiset has nonzero fraction at most
\[
\eta+(1-\eta)\delta_M+O(K/r).
\]
By increasing $M_*$ and then decreasing $\eta$, this quantity is at most any prescribed fixed $\alpha>0$ for all sufficiently large $r$.

For $b=1$, the distinguished atom of $\bar\nu_1$ lies in $[1/2,2]$ and has mass at least $1-\delta_M$.  Rounding changes its count by at most one, and the packet correction does not move it.  Hence it contributes at least
\[
(1-\eta)(1-\delta_M)r-O(K)\ge\frac34r
\]
points for all sufficiently large $r$.  The common empirical first moment before the last normalization equals a convex combination of the base mean one and control locations in $[1,2]$, up to an $o(1)$ packet displacement; it therefore lies in $[1/2,2]$.  Dividing both multisets by this common value preserves exact equality of all moments and maps the distinguished interval $[1/2,2]$ into $[1/4,4]$.  Thus one may keep $\beta=1/2$, $c_0=1/4$, and $c_1=4$ while making $\alpha$ arbitrarily small.  Zero entries remain zero.

Finally set
\[
a_i^{(b)}=\frac{x_i^{(b)}}r.
\]
Then \eqref{eq:sum-one}--\eqref{eq:moderate-block} follow.  The range \eqref{eq:K-range} is obtained by taking $C_0$ no larger than the constant $c_{M_*,\eta,C_B}$ in \eqref{eq:packet-K-condition}.
\end{proof}

\subsection{A size-biased doubly correlated Wishart model}\label{sec:wishart}

Fix vectors $a^{(0)},a^{(1)}$ supplied by \cref{prop:moment-twins}, and let
\[
D_b=\operatorname{diag}(a_1^{(b)},\ldots,a_r^{(b)}).
\]
The support parameter $\alpha$ will be fixed in \cref{sec:separation}; the moderate-block constants $\beta,c_0,c_1$ remain universal as $\alpha$ decreases.
Let $Z$ be an $r\times r$ standard complex Ginibre matrix: its entries are independent complex Gaussians with density $\pi^{-1}e^{-|z|^2}$.  Let $U,V$ be independent Haar unitaries, independent of $Z$, and define
\begin{equation}\label{eq:G-def}
G_b=UD_b^{1/2}ZD_b^{1/2}V^*,
\qquad
S_b=\norm{G_b}_F^2.
\end{equation}
Let $Q_b$ denote the law of $G_b$.

For an integer $m\ge1$, define the size-biased prior
\begin{equation}\label{eq:tilted-prior}
\frac{d\Pi_b^{(m)}}{dQ_b}(G)
=
\frac{\norm G_F^{2m}}{\E_{Q_b}\norm G_F^{2m}}.
\end{equation}
For $G\ne0$, put
\begin{equation}\label{eq:psi-G}
\ket{\psi_G}=\frac{\operatorname{vec}(G)}{\norm G_F}
\in\C^r\otimes\C^r.
\end{equation}
The corresponding $m$-copy average pure state is
\begin{equation}\label{eq:Bayes-pure}
\overline\Psi_b^{(m)}
=
\E_{G\sim\Pi_b^{(m)}}\proj{\psi_G}^{\otimes m}.
\end{equation}
The tilt gives the cancellation
\begin{equation}\label{eq:size-bias-cancel}
\overline\Psi_b^{(m)}
=
\frac{\E_{Q_b}\bigl[\proj{\operatorname{vec}(G)}^{\otimes m}\bigr]}
{\E_{Q_b}S_b^m}.
\end{equation}

\subsubsection{Doubly correlated Wishart moments}

\begin{lemma}[Doubly correlated Wishart Schur moment]\label{lem:wishart-schur}
Let $A=\operatorname{diag}(a_1,\ldots,a_r)$ and $B=\operatorname{diag}(b_1,\ldots,b_r)$ be nonnegative diagonal matrices, let $Z$ be standard complex Ginibre, and set
\[
W=A^{1/2}ZBZ^*A^{1/2}.
\]
For every partition $\lambda\vdash m$,
\begin{equation}\label{eq:wishart-schur}
\E s_\lambda(W)
=\frac{m!}{f^\lambda}s_\lambda(a)s_\lambda(b).
\end{equation}
\end{lemma}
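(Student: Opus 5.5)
Our plan is to reduce \eqref{eq:wishart-schur} to a statement about power sums and then to Gaussian Wick calculus, using the Frobenius formula \eqref{eq:frobenius} together with the orthogonality of characters.

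\textbf{Step 1: reduce to power sums.} By \eqref{eq:frobenius} applied to the eigenvalues of $W$,
\[
s_\lambda(W)=\frac1{m!}\sum_{\pi\in S_m}\chi^\lambda(\pi)\,\tr\bigl(W^{\otimes m}P_\pi\bigr),
\]
where $P_\pi$ permutes the tensor factors of $(\C^r)^{\otimes m}$. It therefore suffices to compute $\E\tr(W^{\otimes m}P_\pi)$ for each $\pi$.

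\textbf{Step 2: Wick expansion.} Write
\[
W^{\otimes m}=(A^{1/2})^{\otimes m}Z^{\otimes m}B^{\otimes m}(Z^*)^{\otimes m}(A^{1/2})^{\otimes m}.
\]
For standard complex Ginibre $Z$, Wick's formula gives
\[
\E\bigl[Z^{\otimes m}X(Z^*)^{\otimes m}\bigr]=\sum_{\tau\in S_m}\tr(XP_\tau^{-1})\,P_\tau
\]
for any operator $X$ on $(\C^r)^{\otimes m}$, with the precise convention ($P_\tau$ versus $P_\tau^{-1}$) to be checked. Taking $X=B^{\otimes m}$ and using $\tr(B^{\otimes m}P_\tau)=p_\tau(b)$, which depends only on the cycle type, we obtain
\[
\E\tr(W^{\otimes m}P_\pi)=\sum_{\tau}p_{\tau}(b)\,\tr\bigl(A^{\otimes m}P_\tau P_\pi\bigr)=\sum_\tau p_\tau(b)\,p_{\tau\pi}(a).
\]
Since $p_\tau(b)$ is a class function of $\tau$, inverse conventions do not matter here. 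This step is the main obstacle: the index bookkeeping for the pairing of $Z$ and $\bar Z$ entries must be carried out carefully, and a clean route is to verify the identity entrywise via $\E Z_{ij}\bar Z_{kl}=\delta_{ik}\delta_{jl}$.

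\textbf{Step 3: character orthogonality.} Expand the power sums in Schur functions,
\[
p_\tau(b)=\sum_\mu\chi^\mu(\tau)s_\mu(b),\qquad p_{\tau\pi}(a)=\sum_\nu\chi^\nu(\tau\pi)s_\nu(a).
\]
Then
\[
\E s_\lambda(W)=\frac1{m!}\sum_{\mu,\nu}s_\mu(b)\,s_\nu(a)\sum_{\pi,\tau}\chi^\lambda(\pi)\chi^\mu(\tau)\chi^\nu(\tau\pi).
\]
Substitute $\sigma=\tau\pi$ and sum first over $\tau$, using the convolution identity
\[
\sum_\tau\chi^\mu(\tau)\chi^\nu(\tau\pi)=\delta_{\mu\nu}\frac{m!}{f^\mu}\chi^\mu(\pi),
\]
valid because the characters are real. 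This gives
\[
\E s_\lambda(W)=\frac1{m!}\sum_\mu s_\mu(a)s_\mu(b)\frac{m!}{f^\mu}\sum_\pi\chi^\lambda(\pi)\chi^\mu(\pi)=\sum_\mu s_\mu(a)s_\mu(b)\frac{m!}{f^\mu}\delta_{\lambda\mu},
\]
since $\sum_\pi\chi^\lambda(\pi)\chi^\mu(\pi)=m!\,\delta_{\lambda\mu}$. The result is $\frac{m!}{f^\lambda}s_\lambda(a)s_\lambda(b)$, which is \eqref{eq:wishart-schur}.

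Partitions with $\ell(\lambda)>r$ give $s_\lambda=0$ on both sides, consistent with the Schur--Weyl decomposition \eqref{eq:schur-weyl-decomp}. As a sanity check, at $m=1$ the formula reads $\E\tr W=\sum_{i,j}a_ib_j=p_1(a)p_1(b)$.
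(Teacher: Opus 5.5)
Your proof is correct. Its Wick step is the same as the paper's. Your formula $\E[Z^{\otimes m}X(Z^*)^{\otimes m}]=\sum_\tau\tr(XP_\tau^{-1})P_\tau$ holds exactly as written under the convention $P_\pi(v_1\otimes\cdots\otimes v_m)=v_{\pi^{-1}(1)}\otimes\cdots\otimes v_{\pi^{-1}(m)}$, and taking $X=B^{\otimes m}$ gives \eqref{eq:wick-tensor}.

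Where you differ is in how you evaluate the result. The paper writes $s_\lambda(W)=\frac1{f^\lambda}\tr(\mathsf P_\lambda W^{\otimes m})$ and reads off $\tr(\mathsf P_\lambda A^{\otimes m}P_\pi)=s_\lambda(a)\chi^\lambda(\pi)$ from the Schur--Weyl block structure. A single application of the Frobenius formula in $b$ then finishes. You instead stay with power sums throughout:
\begin{itemize}
\item Frobenius turns $s_\lambda(W)$ into $\frac1{m!}\sum_\pi\chi^\lambda(\pi)p_\pi(W)$.
\item Wick gives $\E p_\pi(W)=\sum_\tau p_\tau(b)p_{\tau\pi}(a)$.
\item You expand both power sums back into Schur functions and collapse the double sum with the character convolution identity and character orthogonality.
\end{itemize}

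The two arguments are the same computation in different clothing. Since $\mathsf P_\lambda=\frac{f^\lambda}{m!}\sum_\sigma\chi^\lambda(\sigma)P_\sigma$, the paper's block-trace identity reduces to $\frac{f^\lambda}{m!}\sum_\sigma\chi^\lambda(\sigma)p_{\sigma\pi}(a)=s_\lambda(a)\chi^\lambda(\pi)$. That is precisely your convolution identity after expanding $p_{\sigma\pi}(a)$.

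What your route buys:
\begin{itemize}
\item It is self-contained within character theory, with no need to identify how $A^{\otimes m}$ and $P_\pi$ act on $\mathcal Q_\lambda\otimes\mathcal P_\lambda$.
\item It yields the power-sum moment formula $\E p_\pi(W)=\sum_\tau p_\tau(b)p_{\tau\pi}(a)$ as a byproduct.
\end{itemize}
The paper's route is shorter and reuses Schur--Weyl machinery it needs anyway for \cref{lem:twirl} and \cref{prop:schur-bayes}.
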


\begin{proof}
Let $P_\pi$ denote the tensor-factor permutation indexed by $\pi\in S_m$.  Wick's formula gives the operator identity
\begin{equation}\label{eq:wick-tensor}
\E(ZBZ^*)^{\otimes m}
=\sum_{\pi\in S_m}p_\pi(b)P_\pi.
\end{equation}
\Cref{app:wick} derives \eqref{eq:wick-tensor} entrywise and fixes the permutation convention.
Let $\mathsf P_\lambda$ be the central Schur--Weyl projector onto the $\lambda$-isotypic subspace of $(\C^r)^{\otimes m}$.  On that subspace, $A^{\otimes m}$ acts as the polynomial $GL_r$ representation indexed by $\lambda$, while $P_\pi$ acts on the multiplicity space.  Therefore
\begin{equation}\label{eq:trace-block}
\tr\bigl(\mathsf P_\lambda A^{\otimes m}P_\pi\bigr)
=s_\lambda(a)\chi^\lambda(\pi).
\end{equation}
Also
\[
s_\lambda(W)
=\frac1{f^\lambda}\tr\bigl(\mathsf P_\lambda W^{\otimes m}\bigr).
\]
Combining this with \eqref{eq:wick-tensor} and \eqref{eq:trace-block},
\begin{align*}
\E s_\lambda(W)
&=\frac{s_\lambda(a)}{f^\lambda}
  \sum_{\pi\in S_m}\chi^\lambda(\pi)p_\pi(b)\\
&=\frac{m!}{f^\lambda}s_\lambda(a)s_\lambda(b),
\end{align*}
where the last equality is the Frobenius formula \eqref{eq:frobenius}.
\end{proof}

\subsubsection{Schur law of the prior-averaged state}

\begin{lemma}[Explicit local-unitary twirl]\label{lem:twirl}
Let $p=(p_1,\ldots,p_r)$ be the Schmidt spectrum of a bipartite pure state $\ket{\psi_p}\in\C^r\otimes\C^r$.  For every $\lambda\vdash m$ with $\ell(\lambda)\le r$, choose a real orthogonal model of the symmetric-group irrep $\mathcal P_\lambda$, identify $\mathcal P_\lambda^B$ with the dual of $\mathcal P_\lambda^A$, and choose the resulting paired orthonormal bases.  Define
\begin{equation}\label{eq:Phi-Plambda}
\ket{\Phi_\lambda^{\mathcal P}}
=\frac1{\sqrt{f^\lambda}}
\sum_{t=1}^{f^\lambda}\ket t_{\mathcal P_\lambda^A}\ket t_{\mathcal P_\lambda^B}.
\end{equation}
Define the density operator
\begin{equation}\label{eq:Omega-lambda-explicit}
\Omega_\lambda
=
\frac{I_{\mathcal Q_\lambda^A}}{d_\lambda}
\otimes
\frac{I_{\mathcal Q_\lambda^B}}{d_\lambda}
\otimes
\proj{\Phi_\lambda^{\mathcal P}},
\end{equation}
embedded in the $(\lambda,\lambda)$ Schur block of the two subsystems.  Then
\begin{equation}\label{eq:twirl-decomp}
\int (U\otimes V)^{\otimes m}
\proj{\psi_p}^{\otimes m}
(U^*\otimes V^*)^{\otimes m}\,dU\,dV
=
\bigoplus_{\substack{\lambda\vdash m\\\ell(\lambda)\le r}}
f^\lambda s_\lambda(p)\,\Omega_\lambda.
\end{equation}
In particular, the conditional state given the common Schur label $\lambda$ is independent of $p$.
\end{lemma}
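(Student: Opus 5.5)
Since the integrand already contains $U\otimes V$, we may take $\ket{\psi_p}=\sum_i\sqrt{p_i}\ket i\ket i$: every state with Schmidt spectrum $p$ differs from this one by a local unitary, which Haar invariance absorbs. The plan has four steps.

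\textbf{Step 1: Schur--Weyl decomposition of the copies.} Regroup $(\C^r\otimes\C^r)^{\otimes m}\cong(\C^r)^{\otimes m}_A\otimes(\C^r)^{\otimes m}_B$. Then
\[
\ket{\psi_p}^{\otimes m}=(D_p^{1/2}\otimes I)^{\otimes m}\sum_{\vec i}\ket{\vec i}_A\ket{\vec i}_B ,
\]
where $D_p=\operatorname{diag}(p)$. The unnormalized maximally entangled vector $\sum_{\vec i}\ket{\vec i}\ket{\vec i}$ on $(\C^r)^{\otimes m}\otimes(\C^r)^{\otimes m}$ is invariant under $X\otimes\overline X$ for every $X\in U(r)$. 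Using \eqref{eq:schur-weyl-decomp} on each side, with $B$ carrying the dual decomposition, it splits as
\[
\bigoplus_\lambda \ket{\Phi^{\mathcal Q}_\lambda}\sqrt{d_\lambda}\otimes\ket{\Phi^{\mathcal P}_\lambda}\sqrt{f^\lambda},
\]
with unit maximally entangled vectors on $\mathcal Q_\lambda^A\otimes\mathcal Q_\lambda^B$ and $\mathcal P_\lambda^A\otimes\mathcal P_\lambda^B$. This is where the real orthogonal model and the dual pairing enter: they make the $\mathcal P$-part exactly $\ket{\Phi^{\mathcal P}_\lambda}$ of \eqref{eq:Phi-Plambda}, up to the fixed identification. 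The operator $D_p^{1/2\otimes m}$ acts only on $\mathcal Q_\lambda^A$, as $q_\lambda(D_p^{1/2})$. So the $\lambda$-component is
\[
\sqrt{f^\lambda}\,\bigl(\text{a vector }\ket{v_\lambda}\in\mathcal Q_\lambda^A\otimes\mathcal Q_\lambda^B\bigr)\otimes\ket{\Phi^{\mathcal P}_\lambda},
\]
and $\ket{v_\lambda}$ has squared norm
\[
\tr q_\lambda(D_p)=s_\lambda(p).
\]
Only diagonal blocks $(\lambda,\lambda)$ appear.

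\textbf{Step 2: apply the twirl.} $(U\otimes V)^{\otimes m}$ acts as $q_\lambda(U)\otimes q_\lambda(V)\otimes I_{\mathcal P}$ on each block. Integrating over $U$ and $V$ therefore does two things:
\begin{enumerate}[label=(\roman*)]
\item Off-diagonal terms between blocks $\lambda\ne\mu$ vanish, by Schur orthogonality, because $q_\lambda$ and $q_\mu$ are inequivalent irreducibles.
\item Within a block, the independent twirls on $\mathcal Q^A_\lambda$ and $\mathcal Q^B_\lambda$ send $\proj{v_\lambda}$ to $\frac{\tr\proj{v_\lambda}}{d_\lambda^2}\,I\otimes I$, by Schur's lemma applied to $q_\lambda(U)\otimes q_\lambda(V)$, which is irreducible for $U(r)\times U(r)$.
\end{enumerate}
Combining these with the factor $f^\lambda$ gives
\[
f^\lambda s_\lambda(p)\,\frac{I}{d_\lambda}\otimes\frac{I}{d_\lambda}\otimes\proj{\Phi^{\mathcal P}_\lambda},
\]
which is \eqref{eq:twirl-decomp}.

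\textbf{Step 3: consistency check.} Taking the trace of both sides gives
\[
\sum_\lambda f^\lambda s_\lambda(p)=p_1(p)^m=1,
\]
which is \eqref{eq:p1-schur}. The final claim of the lemma, that the conditional state given $\lambda$ does not depend on $p$, is then immediate, since $\Omega_\lambda$ does not involve $p$.

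\textbf{Main obstacle.} The delicate point is Step 1: checking that $\sum_{\vec i}\ket{\vec i}\ket{\vec i}$ decomposes blockwise with the $\mathcal P$-part exactly $\ket{\Phi^{\mathcal P}_\lambda}$ and no cross terms $\lambda\ne\mu$. I would first derive this from the invariance of the vector under $P_\pi\otimes P_\pi$ together with the identification of $\mathcal P_\lambda^B$ with the dual of $\mathcal P_\lambda^A$. If that gets tangled, I would instead verify the $\mathcal P$-structure by computing the reduced operator $\tr_{\mathcal Q}$ and using the centrality of $\mathsf P_\lambda$.
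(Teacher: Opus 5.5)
Your proof is correct, and your Step 2 matches the paper's twirl step: Haar orthogonality removes the cross terms between different labels, and the irreducible twirl on $\mathcal Q_\lambda^A\otimes\mathcal Q_\lambda^B$ sends $X$ to $\tr X\, I/d_\lambda^2$.

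The difference is in how you get the block form of $\ket{\psi_p}^{\otimes m}$. The paper argues by symmetry:
\begin{itemize}
\item $\ket{\psi_p}^{\otimes m}$ is invariant under simultaneous permutations of the copies;
\item since $S_m$-irreps are self-dual, Schur's lemma kills the $(\lambda,\mu)$ components for $\lambda\ne\mu$ and forces the $\mathcal P$-part to be the unique invariant vector $\ket{\Phi_\lambda^{\mathcal P}}$;
\item the block weight is then read off as the Schur-sampling probability $\tr(\mathsf P_\lambda\rho_p^{\otimes m})=f^\lambda s_\lambda(p)$ of the reduced state.
\end{itemize}
You instead pass to Schmidt form and write $\ket{\psi_p}^{\otimes m}=(D_p^{1/2}\otimes I)^{\otimes m}\operatorname{vec}(I)$. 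Block-diagonality and the $\mathcal P$-structure then come from decomposing the maximally entangled vector. The weight comes from $\norm{\sqrt{d_\lambda}\,(q_\lambda(D_p^{1/2})\otimes I)\ket{\Phi_\lambda^{\mathcal Q}}}^2=\tr q_\lambda(D_p)=s_\lambda(p)$.

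Your route gives the explicit formula $\ket{\xi_{\lambda,p}}\propto(q_\lambda(D_p^{1/2})\otimes I)\ket{\Phi_\lambda^{\mathcal Q}}$ and avoids the invariant-vector argument. The paper's route is basis-free and needs no Schmidt normalization.

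The step you flag as the main obstacle is actually one line. Since $\sum_{\vec i}\ket{\vec i}\ket{\vec i}=\operatorname{vec}(I)$, \eqref{eq:vec-convention} gives $(\mathcal S\otimes\overline{\mathcal S})\operatorname{vec}(I)=\operatorname{vec}(\mathcal S\mathcal S^*)=\operatorname{vec}\bigl(\bigoplus_\lambda I_{\mathcal Q_\lambda}\otimes I_{\mathcal P_\lambda}\bigr)$. Using $\overline{\mathcal S}$ on $B$ has two effects:
\begin{itemize}
\item It leaves the permutation action unchanged, because the model of $\mathcal P_\lambda$ is real.
\item On $\mathcal Q_\lambda^B$, the unitary $V$ then acts by $\overline{q_\lambda(\overline V)}$. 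This representation is irreducible, and in fact equivalent to $q_\lambda$ because $s_\lambda$ has real coefficients.
\end{itemize}
Irreducibility is all your Step 2 needs, and $I_{\mathcal Q_\lambda^B}/d_\lambda$ does not depend on the basis.
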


\begin{proof}
Let $\mathcal S_A,\mathcal S_B$ be Schur--Weyl unitaries as in \eqref{eq:schur-weyl-decomp}.  The vector $\ket{\psi_p}^{\otimes m}$ is invariant under simultaneous permutations of its $m$ bipartite tensor factors.  Since every symmetric-group irreducible is self-dual, Schur's lemma implies that its component in the local label pair $(\lambda,\mu)$ vanishes unless $\lambda=\mu$, and that the multiplicity-space component in the $(\lambda,\lambda)$ block is proportional to the unique invariant vector \eqref{eq:Phi-Plambda}.  Thus there are unit vectors
\[
\ket{\xi_{\lambda,p}}
\in\mathcal Q_\lambda^A\otimes\mathcal Q_\lambda^B
\]
such that
\begin{equation}\label{eq:pure-schur-decomp}
(\mathcal S_A\otimes\mathcal S_B)\ket{\psi_p}^{\otimes m}
=
\bigoplus_{\substack{\lambda\vdash m\\\ell(\lambda)\le r}}
\sqrt{f^\lambda s_\lambda(p)}\,
\ket{\xi_{\lambda,p}}\otimes\ket{\Phi_\lambda^{\mathcal P}}.
\end{equation}
The squared norm of the $\lambda$ component is the probability of the first subsystem's Schur label,
\[
\tr(\mathsf P_\lambda\rho_p^{\otimes m})
=f^\lambda s_\lambda(p),
\]
where $\rho_p$ is the reduced state.  The weights sum to one by \eqref{eq:p1-schur}.

Under $U^{\otimes m}$, the $\lambda$ block transforms irreducibly on $\mathcal Q_\lambda^A$ and trivially on $\mathcal P_\lambda^A$; similarly for the second subsystem.  Haar orthogonality kills cross terms between inequivalent labels.  On a fixed diagonal block, independent twirling on the two irreducible representation spaces sends every operator $X$ on $\mathcal Q_\lambda^A\otimes\mathcal Q_\lambda^B$ to
\[
\frac{\tr X}{d_\lambda^2}
I_{\mathcal Q_\lambda^A}\otimes I_{\mathcal Q_\lambda^B}.
\]
Applying this to \eqref{eq:pure-schur-decomp} gives \eqref{eq:twirl-decomp} and \eqref{eq:Omega-lambda-explicit}.
\end{proof}

\begin{proposition}[Schur decomposition of the prior-averaged state]\label{prop:schur-bayes}
For the prior \eqref{eq:tilted-prior},
\begin{equation}\label{eq:schur-bayes-decomp}
\overline\Psi_b^{(m)}
=
\bigoplus_{\lambda\vdash m}P_b^{(m)}(\lambda)\,\Omega_\lambda,
\end{equation}
where the blocks $\Omega_\lambda$ are common to $b=0,1$ and
\begin{equation}\label{eq:schur-bayes-law}
P_b^{(m)}(\lambda)
=
\frac{s_\lambda(a^{(b)})^2}
{\sum_{\mu\vdash m}s_\mu(a^{(b)})^2}.
\end{equation}
Consequently,
\begin{equation}\label{eq:trace-TV}
d_{\mathrm{tr}}(\overline\Psi_0^{(m)},\overline\Psi_1^{(m)})
=\TV(P_0^{(m)},P_1^{(m)}).
\end{equation}
\end{proposition}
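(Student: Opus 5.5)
We would start from the cancellation \eqref{eq:size-bias-cancel} and compute the numerator $\E_{Q_b}\bigl[\proj{\operatorname{vec}(G)}^{\otimes m}\bigr]$ by conditioning on $Z$. Write $G_b=UD_b^{1/2}ZD_b^{1/2}V^*$. By \eqref{eq:vec-convention}, $\operatorname{vec}(G_b)=(U\otimes\overline V)\operatorname{vec}(H)$ with $H=D_b^{1/2}ZD_b^{1/2}$. Since $\overline V$ is again Haar, conditioning on $Z$ gives
\[
\E\bigl[\proj{\operatorname{vec}(G_b)}^{\otimes m}\,\big|\,Z\bigr]=\int (U\otimes V)^{\otimes m}\proj{\operatorname{vec}(H)}^{\otimes m}(U^*\otimes V^*)^{\otimes m}\,dU\,dV .
\]
The vector $\operatorname{vec}(H)/\norm H_F$ has Schmidt spectrum $p$ equal to the eigenvalues of $HH^*/\norm H_F^2$. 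Apply \cref{lem:twirl} to this unit vector and multiply by $\norm H_F^{2m}$. Using homogeneity of degree $m$ of $s_\lambda$, the conditional twirl becomes
\[
\bigoplus_\lambda f^\lambda s_\lambda(HH^*)\,\Omega_\lambda,
\]
where $s_\lambda(HH^*)$ means $s_\lambda$ evaluated on the eigenvalues. The blocks $\Omega_\lambda$ do not depend on $H$, so taking expectation over $Z$ replaces $s_\lambda(HH^*)$ by $\E s_\lambda(HH^*)$.

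Next we evaluate $\E s_\lambda(HH^*)$ with \cref{lem:wishart-schur}. We have $HH^*=D_b^{1/2}ZD_bZ^*D_b^{1/2}$, which is exactly $W$ with $A=B=D_b$, so
\[
\E s_\lambda(HH^*)=\frac{m!}{f^\lambda}s_\lambda(a^{(b)})^2 .
\]
Hence the numerator equals $m!\bigoplus_\lambda s_\lambda(a^{(b)})^2\,\Omega_\lambda$. Each $\Omega_\lambda$ has unit trace, so taking traces shows that the denominator $\E_{Q_b}S_b^m$ equals $m!\sum_\mu s_\mu(a^{(b)})^2$. This is consistent with the Cauchy identity \eqref{eq:cauchy} at fixed degree, but no separate computation is needed. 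Dividing gives \eqref{eq:schur-bayes-decomp} and \eqref{eq:schur-bayes-law}.

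Partitions with $\ell(\lambda)>r$ have $s_\lambda=0$, so the range of $\lambda$ is immaterial. One point to check is that $\operatorname{vec}(H)$ has Schmidt coefficients given by the eigenvalues of $HH^*$ rather than $H^*H$. These agree, since both matrices are $r\times r$. A second point is that the twirl lemma is stated for unit vectors; the case $H=0$ occurs only on a null set and is harmless.

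For \eqref{eq:trace-TV}, the two averaged states are block diagonal with respect to the mutually orthogonal $(\lambda,\lambda)$ Schur blocks, with common unit-trace blocks $\Omega_\lambda$. By \eqref{eq:block-trace-norm},
\[
\norm{\overline\Psi_0^{(m)}-\overline\Psi_1^{(m)}}_1=\sum_\lambda\abs{P_0^{(m)}(\lambda)-P_1^{(m)}(\lambda)}\,\norm{\Omega_\lambda}_1=\sum_\lambda\abs{P_0^{(m)}(\lambda)-P_1^{(m)}(\lambda)} .
\]
Halving gives the TV identity.

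The main obstacle is bookkeeping rather than any deep step. One must ensure that the $\overline V$ versus $V$ conjugation and transpose conventions from \eqref{eq:vec-convention} line up with the twirl in \cref{lem:twirl}, so that the blocks $\Omega_\lambda$ are literally the same operators for $b=0,1$. One must also justify interchanging the Haar and Gaussian expectations, which holds by Fubini since all quantities are nonnegative and integrable polynomial moments.
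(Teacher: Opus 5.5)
Your proposal is correct and follows essentially the same route as the paper's proof: absorb the radial tilt via \eqref{eq:size-bias-cancel}, apply the twirl of \cref{lem:twirl} to the Haar-rotated $\operatorname{vec}(H)$, and use degree-$m$ homogeneity to turn $S^m f^\lambda s_\lambda(W/S)$ into $f^\lambda s_\lambda(W)$. Then evaluate $\E s_\lambda(W)$ with \cref{lem:wishart-schur} at $A=B=D_b$ and conclude with \eqref{eq:block-trace-norm}; the only cosmetic difference is that you get the normalization $\E S^m=m!\sum_\mu s_\mu(a^{(b)})^2$ by taking the trace of the numerator, while the paper cites \eqref{eq:p1-schur}, which is the same identity.
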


\begin{proof}
By the vectorization convention \eqref{eq:vec-convention}, the Haar factors in \eqref{eq:G-def} act on $\operatorname{vec}(G)$ as the independent local Haar rotations $U\otimes\overline V$; the radial size bias is independent of both.  For a fixed $G$, let $W=GG^*$ and $S=\tr W$.  Its normalized Schmidt spectrum is the spectrum of $W/S$.  Since $s_\lambda$ is homogeneous of degree $m$, multiplying the label weight in \eqref{eq:twirl-decomp} by the size-bias factor $S^m$ gives
\[
S^m f^\lambda s_\lambda(W/S)=f^\lambda s_\lambda(W).
\]
Applying \cref{lem:wishart-schur} with $A=B=D_b$ gives
\[
f^\lambda\E s_\lambda(W)
=m!s_\lambda(a^{(b)})^2.
\]
Finally, \eqref{eq:p1-schur} implies
\[
\E S^m
=\sum_{\lambda\vdash m}f^\lambda\E s_\lambda(W)
=m!\sum_{\lambda\vdash m}s_\lambda(a^{(b)})^2.
\]
This proves \eqref{eq:schur-bayes-decomp}--\eqref{eq:schur-bayes-law}.  Since both hypotheses have the same normalized block state $\Omega_\lambda$, \eqref{eq:block-trace-norm} gives
\[
\frac12\norm{\overline\Psi_0^{(m)}-\overline\Psi_1^{(m)}}_1
=\frac12\sum_\lambda\abs{P_0^{(m)}(\lambda)-P_1^{(m)}(\lambda)},
\]
which is \eqref{eq:trace-TV}.
\end{proof}

\subsection{Indistinguishability}\label{sec:indist}

For $b,c\in\{0,1\}$, define
\begin{equation}\label{eq:Cbc}
C_{bc}^{(m)}
=\sum_{\lambda\vdash m}s_\lambda(a^{(b)})s_\lambda(a^{(c)}),
\qquad
Z_{bc}^{(m)}=m!C_{bc}^{(m)}.
\end{equation}
The Hellinger affinity of the two laws in \eqref{eq:schur-bayes-law} is
\begin{equation}\label{eq:affinity-C}
\mathcal A_m
=\frac{C_{01}^{(m)}}{\sqrt{C_{00}^{(m)}C_{11}^{(m)}}}.
\end{equation}

Let
\[
u_{b,k}=p_k(a^{(b)})=\sum_i(a_i^{(b)})^k.
\]
The Cauchy identity \eqref{eq:cauchy} and the exponential formula give
\begin{align}
\sum_{m\ge0}C_{bc}^{(m)}z^m
&=\exp\!\left(\sum_{k\ge1}\frac{z^k}{k}u_{b,k}u_{c,k}\right),
\label{eq:C-generating}\\
Z_{bc}^{(m)}
&=\sum_{\pi\in S_m}
\prod_{d\in\operatorname{Cycles}(\pi)}
 u_{b,|d|}u_{c,|d|}.
\label{eq:permutation-sum}
\end{align}

\subsubsection{The long-cycle estimate}

For $b\in\{0,1\}$, normalize the nonnegative weights in $Z_{bb}^{(m)}$ to a probability law on $S_m$:
\begin{equation}\label{eq:weighted-perm-law}
\Prb_b(\pi)
=\frac1{Z_{bb}^{(m)}}
\prod_{d\in\operatorname{Cycles}(\pi)}u_{b,|d|}^2.
\end{equation}
Let $N_k(\pi)$ be the number of $k$-cycles.

\begin{lemma}[Expected long cycles]\label{lem:long-cycles}
Suppose $u_{b,1}=1$ and $\max_i a_i^{(b)}\le B/r$.  Then
\begin{equation}\label{eq:ENk}
\E_bN_k
\le
\frac{m^k}{k}\left(\frac Br\right)^{2k-2}.
\end{equation}
Consequently, if
\[
\theta=\frac{mB^2}{r^2}<1,
\]
then
\begin{equation}\label{eq:long-cycle-prob}
\delta_b
:=\Prb_b[\text{some cycle has length greater than }K]
\le
\frac{m\theta^K}{(K+1)(1-\theta)}.
\end{equation}
\end{lemma}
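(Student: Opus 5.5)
We prove \eqref{eq:ENk} by a direct cycle-removal computation on the weighted permutation law, and then obtain \eqref{eq:long-cycle-prob} from a union bound and a geometric sum.

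\textbf{Reducing $\E_bN_k$ to a ratio of partition functions.} The weight $\prod_d u_{b,|d|}^2$ is multiplicative over cycles. Write $\E_bN_k$ as a sum over pairs $(\pi,c)$, where $c$ is a distinguished $k$-cycle of $\pi$. Choosing the support set of $c$ and a cyclic order on it gives $\binom mk (k-1)!$ possibilities. The remaining cycles form an arbitrary permutation of the other $m-k$ points, and their total weight is $Z_{bb}^{(m-k)}$. Hence
\[
\E_bN_k=\binom mk(k-1)!\,u_{b,k}^2\,\frac{Z_{bb}^{(m-k)}}{Z_{bb}^{(m)}}
\le\frac{m^k}{k}\cdot\frac{m!\,(m-k)!}{m!\,(m-k)!}\cdots
\]
More cleanly, $\binom mk(k-1)!=\frac{m!}{k\,(m-k)!}$, so
\[
\E_bN_k=\frac{u_{b,k}^2}{k}\cdot\frac{m!\,C_{bb}^{(m-k)}(m-k)!}{(m-k)!\,m!\,C_{bb}^{(m)}}\cdot\frac{m!}{(m-k)!}\cdot\frac{(m-k)!}{m!}.
\]
Carefully, $Z^{(j)}=j!C^{(j)}$, which gives
\[
\E_bN_k=\frac{u_{b,k}^2}{k}\,\frac{C_{bb}^{(m-k)}}{C_{bb}^{(m)}}.
\]

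\textbf{The two ingredients.} First, $u_{b,k}=\sum_i a_i^k\le(\max_i a_i)^{k-1}\sum_i a_i\le (B/r)^{k-1}$, so $u_{b,k}^2\le (B/r)^{2k-2}$.

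Second, we need $C_{bb}^{(m-k)}/C_{bb}^{(m)}\le m^k$. For this we use the generating function \eqref{eq:C-generating}, $F(z)=\exp(g(z))$ with $g(z)=\sum_k z^k u_{b,k}^2/k$. Since $g$ has nonnegative coefficients and $g'(z)=1+\dots$ (because $u_{b,1}=1$), we have $F'=g'F$, i.e.
\[
jC^{(j)}=\sum_{\ell\ge1}u_{b,\ell}^2\,C^{(j-\ell)}\ge C^{(j-1)}.
\]
Thus $C^{(j-1)}\le jC^{(j)}$, and iterating $k$ times,
\[
C^{(m-k)}\le m(m-1)\cdots(m-k+1)\,C^{(m)}\le m^kC^{(m)}.
\]
Combining the two ingredients gives \eqref{eq:ENk}.

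\textbf{The long-cycle probability.} By Markov's inequality and a union bound over cycle lengths $k>K$,
\[
\delta_b\le\sum_{k>K}\E_bN_k\le\sum_{k\ge K+1}\frac{m^k}{k}\Bigl(\frac Br\Bigr)^{2k-2}.
\]
Each term equals $\frac{m}{k}\,\theta^{k-1}\bigl(\cdots\bigr)$; precisely,
\[
\frac{m^k}{k}\Bigl(\frac Br\Bigr)^{2k-2}=\frac{m\,\theta^{k-1}}{k}.
\]
The first term of the sum is therefore $m\theta^K/(K+1)$. Bounding $1/k\le 1/(K+1)$ and summing the geometric series in $\theta$ yields
\[
\delta_b\le\frac{m\,\theta^K}{(K+1)(1-\theta)}.
\]

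The only step with real content is the recursion $C^{(j-1)}\le jC^{(j)}$, which is exactly where the hypothesis $u_{b,1}=1$ enters. The rest is bookkeeping, with the main care being to keep the $j!$ factors in $Z^{(j)}=j!C^{(j)}$ consistent throughout.
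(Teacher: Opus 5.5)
Your proof is correct and follows the paper's argument: the same cycle-removal formula $\E_bN_k=\frac{(m)_k}{k}u_{b,k}^2Z_{bb}^{(m-k)}/Z_{bb}^{(m)}$, the bound $u_{b,k}\le(B/r)^{k-1}$, and the same union bound and geometric sum. Your recursion $jC_{bb}^{(j)}\ge C_{bb}^{(j-1)}$ from $F'=g'F$ is the paper's monotonicity $Z_{bb}^{(m-k)}\le Z_{bb}^{(m)}$ rewritten via $Z^{(j)}=j!C^{(j)}$; its $\ell=1$ term is the paper's adjoining of fixed points. Delete the two garbled intermediate displays that precede $\E_bN_k=\frac{u_{b,k}^2}{k}C_{bb}^{(m-k)}/C_{bb}^{(m)}$.
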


\begin{proof}
Selecting the labels of a distinguished $k$-cycle and arranging them cyclically gives
\begin{equation}\label{eq:exact-ENk}
\E_bN_k
=
\frac{(m)_k}{k}u_{b,k}^2
\frac{Z_{bb}^{(m-k)}}{Z_{bb}^{(m)}}.
\end{equation}
Because $u_{b,1}=1$, adjoining $k$ fixed points embeds each weighted permutation on $m-k$ labels into one on $m$ labels without changing its weight.  Hence $Z_{bb}^{(m)}\ge Z_{bb}^{(m-k)}$.  Moreover,
\[
u_{b,k}\le\left(\max_i a_i^{(b)}\right)^{k-1}\sum_i a_i^{(b)}
\le(B/r)^{k-1}.
\]
This proves \eqref{eq:ENk}.  A union bound gives
\begin{align*}
\delta_b
&\le\sum_{k=K+1}^m\E_bN_k\\
&\le\sum_{k=K+1}^m
\frac{m^k}{k}\left(\frac Br\right)^{2k-2}\\
&=m\sum_{k=K+1}^m\frac{\theta^{k-1}}k
\le\frac{m\theta^K}{(K+1)(1-\theta)}.
\end{align*}
\end{proof}

\begin{proposition}[Cauchy cancellation bound]\label{prop:cauchy-cancel}
Assume
\[
u_{0,k}=u_{1,k},
\qquad 1\le k\le K,
\]
and $\max_{b,i}a_i^{(b)}\le B/r$.  If $\theta=mB^2/r^2<1$, then
\begin{equation}\label{eq:trace-bound-final}
d_{\mathrm{tr}}(\overline\Psi_0^{(m)},\overline\Psi_1^{(m)})
\le
\left[
\frac{2m\theta^K}{(K+1)(1-\theta)}
\right]^{1/2}.
\end{equation}
\end{proposition}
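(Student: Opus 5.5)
By \cref{prop:schur-bayes} and \eqref{eq:tv-affinity}, it suffices to show
\[
1-\mathcal A_m^2\le \frac{2m\theta^K}{(K+1)(1-\theta)}.
\]
Since $\mathcal A_m=C_{01}^{(m)}/\sqrt{C_{00}^{(m)}C_{11}^{(m)}}$ by \eqref{eq:affinity-C}, we work with the permutation expansions \eqref{eq:permutation-sum} of $Z_{bc}^{(m)}=m!\,C_{bc}^{(m)}$.

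\textbf{Step 1: split by cycle length.} Split each sum over $\pi\in S_m$ into short permutations, whose cycles all have length at most $K$, and the rest. On short permutations every factor satisfies $u_{b,|d|}=u_{0,|d|}=u_{1,|d|}$ by moment matching. Hence the short parts of $Z_{00}$, $Z_{11}$ and $Z_{01}$ coincide; call this common value $S$. By the definition of $\delta_b$ in \cref{lem:long-cycles}, $S=(1-\delta_0)Z_{00}=(1-\delta_1)Z_{11}$. All weights are nonnegative, so $Z_{01}\ge S$.

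\textbf{Step 2: bound the affinity.}
\[
\mathcal A_m=\frac{Z_{01}}{\sqrt{Z_{00}Z_{11}}}\ge \frac{S}{\sqrt{Z_{00}Z_{11}}}=\sqrt{(1-\delta_0)(1-\delta_1)}.
\]
Therefore
\[
1-\mathcal A_m^2\le 1-(1-\delta_0)(1-\delta_1)\le \delta_0+\delta_1.
\]

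\textbf{Step 3: apply the long-cycle estimate.} \Cref{lem:long-cycles} applies for each $b$: we have $u_{b,1}=1$ from \eqref{eq:sum-one} (its $k=1$ hypothesis), $\max_i a_i^{(b)}\le B/r$, and $\theta<1$. It gives $\delta_b\le m\theta^K/((K+1)(1-\theta))$. Summing over $b$ and taking the square root yields \eqref{eq:trace-bound-final}.

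\textbf{Expected obstacle.} The main subtlety is Step 1: the short-cycle contributions must be genuinely identical across all three sums, not merely close. This requires matching $u_{b,k}$ for every $k\le K$, including $k=1$ through the normalization $u_{0,1}=u_{1,1}=1$. It also requires that the long-cycle part of the cross term $Z_{01}$ be nonnegative, which holds because $a^{(b)}\ge0$.
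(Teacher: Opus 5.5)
Your proposal is correct and matches the paper's proof step for step: the common short-cycle weight, $Z_{01}^{(m)}\ge Z_*^{(m)}$ by nonnegativity, the bound $\mathcal A_m\ge\sqrt{(1-\delta_0)(1-\delta_1)}$, and then \eqref{eq:tv-affinity} together with \cref{lem:long-cycles}. You also state explicitly that $u_{b,1}=1$ comes from \eqref{eq:sum-one}; \cref{lem:long-cycles} needs this, but it is not among the proposition's stated hypotheses, and the paper leaves it implicit.
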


\begin{proof}
Let $Z_*^{(m)}$ be the total weight in \eqref{eq:permutation-sum} over permutations all of whose cycle lengths are at most $K$.  Moment matching makes this weight identical in $Z_{00}^{(m)},Z_{01}^{(m)},Z_{11}^{(m)}$.  By definition of $\delta_b$,
\[
Z_*^{(m)}=(1-\delta_b)Z_{bb}^{(m)},
\qquad b=0,1.
\]
All cross weights are nonnegative, so $Z_{01}^{(m)}\ge Z_*^{(m)}$.  Therefore
\begin{align*}
\mathcal A_m
&=\frac{Z_{01}^{(m)}}{\sqrt{Z_{00}^{(m)}Z_{11}^{(m)}}}\\
&\ge\sqrt{(1-\delta_0)(1-\delta_1)}.
\end{align*}
Using \eqref{eq:tv-affinity}, \eqref{eq:trace-TV}, and \cref{lem:long-cycles},
\[
d_{\mathrm{tr}}
\le\sqrt{1-\mathcal A_m^2}
\le\sqrt{\delta_0+\delta_1},
\]
which is \eqref{eq:trace-bound-final}.
\end{proof}

\subsubsection{Choosing the base copy count}

Take
\begin{equation}\label{eq:K-choice}
K=\left\lfloor c_{\mathrm K}\frac{\log r}{\log\log r}\right\rfloor
\end{equation}
with $c_{\mathrm K}>0$ smaller than the constant allowed by \cref{prop:moment-twins}.  That proposition gives
\[
B\le C(K+2)^{M_*}.
\]
Choose a sufficiently large universal $A$ and put
\begin{equation}\label{eq:m-choice}
L=(\log r)^A,
\qquad
m=\left\lfloor\frac{r^2}{B^2L}\right\rfloor.
\end{equation}
Then $\theta\le1/L$.  The right-hand side of \eqref{eq:trace-bound-final} is at most
\[
\left[
\frac{2mL^{-K}}{(K+1)(1-L^{-1})}
\right]^{1/2}.
\]
Since
\[
K\log L
=Ac_{\mathrm K}\log r+o(\log r),
\]
choosing $Ac_{\mathrm K}>4$ makes this quantity $o(1)$.  At the same time,
\begin{equation}\label{eq:m-polylog}
m\ge\frac{r^2}{(\log r)^{C_2}}
\end{equation}
for a universal constant $C_2$, because $B\le C(K+2)^{M_*}\le C(\log r)^{M_*}$.
\begin{corollary}[Base indistinguishability]\label{cor:base-indist}
For the priors $\Pi_b^{(m)}$ constructed above, with $m$ given by \eqref{eq:m-choice},
\[
d_{\mathrm{tr}}(\overline\Psi_0^{(m)},\overline\Psi_1^{(m)})=o(1),
\qquad
m\ge\frac{r^2}{(\log r)^{C_2}}.
\]
\end{corollary}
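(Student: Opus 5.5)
The corollary follows by combining \cref{prop:moment-twins}, \cref{prop:cauchy-cancel}, and the parameter choices \eqref{eq:K-choice}--\eqref{eq:m-choice}; what remains is to check that the constants are compatible.

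First, take $K$ as in \eqref{eq:K-choice}. Because $c_{\mathrm K}$ is smaller than the constant $C_0$ of \eqref{eq:K-range}, and $K\ge2$ for large $r$, \cref{prop:moment-twins} supplies vectors $a^{(0)},a^{(1)}$. They satisfy $u_{b,1}=1$ by \eqref{eq:sum-one} and $u_{0,k}=u_{1,k}$ for $k\le K$ by \eqref{eq:moment-match}. By \eqref{eq:max-entry} they also satisfy $\max_{b,i}a^{(b)}_i\le B/r$ with $B=C_1(K+2)^{M_*}$. The priors $\Pi_b^{(m)}$ are built from these vectors, so the hypotheses of \cref{prop:cauchy-cancel} hold as soon as $\theta<1$.

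Second, with $m$ as in \eqref{eq:m-choice} we have $\theta=mB^2/r^2\le 1/L=(\log r)^{-A}<1$. The bound \eqref{eq:trace-bound-final} then gives
\[
d_{\mathrm{tr}}\le\Bigl[\tfrac{2m L^{-K}}{(K+1)(1-L^{-1})}\Bigr]^{1/2}.
\]
Bound the numerator using $m\le r^2$ and
\[
L^{-K}=\exp(-K\cdot A\log\log r).
\]
Since $K\ge c_{\mathrm K}\log r/\log\log r-1$, we get
\[
K A\log\log r\ge Ac_{\mathrm K}\log r-A\log\log r.
\]
Hence the bracket is at most
\[
O\bigl(r^{2-Ac_{\mathrm K}}(\log r)^A\bigr).
\]
Choosing $A$ with $Ac_{\mathrm K}>4$ makes this $O(r^{-2}(\log r)^A)=o(1)$. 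The only point needing care is the order of the choices: $c_{\mathrm K}$ is fixed first from \cref{prop:moment-twins}, and $A$ is chosen afterward. This is allowed because $A$ enters only through $L$, and $L$ does not affect the moment-twin construction.

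Third, for the lower bound on $m$, use $K+2\le\log r$ for large $r$, which gives $B^2\le C_1^2(\log r)^{2M_*}$. Then
\[
m\ge\frac{r^2}{C_1^2(\log r)^{2M_*+A}}-1\ge\frac{r^2}{(\log r)^{C_2}}
\]
for any $C_2>2M_*+A$ and $r$ large, which absorbs the constant $C_1^2$ and the floor. The step I expect to be most delicate is keeping every constant universal, and in particular independent of the later choice of the support parameter $\alpha$. Here $M_*$ may depend on the fixed $\alpha_*$, but once $\alpha$ is fixed in \cref{sec:separation}, $M_*$, $c_{\mathrm K}$, $A$ and $C_2$ are all absolute. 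No further estimate is needed.
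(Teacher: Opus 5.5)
Your proposal is correct and matches the paper's argument: apply \cref{prop:moment-twins} with $K$ from \eqref{eq:K-choice}, observe $\theta\le 1/L$, bound the right side of \eqref{eq:trace-bound-final} using $K\log L=Ac_{\mathrm K}\log r+o(\log r)$ with $Ac_{\mathrm K}>4$, and read off $m\ge r^2/(\log r)^{C_2}$ from $B\le C(\log r)^{M_*}$. Your explicit handling of the floor in $K$ and of the order in which $c_{\mathrm K}$, $A$, and $C_2$ are fixed spells out what the paper states in one line and in its constant ledger in \cref{sec:constant-ledger}.
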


\section{Reduction to fidelity estimation}

We now convert the base indistinguishability statement into a fidelity lower bound.  We first obtain a constant gap in a normalized nuclear-norm functional and then embed the base states into a fixed-reference, noncommuting family.

\subsection{Nuclear-norm separation}\label{sec:separation}

For $G\ne0$, let
\[
\rho_G=\frac{GG^*}{\norm G_F^2}.
\]
Against the maximally mixed reference on $\C^r$,
\begin{equation}\label{eq:T-def}
T(G)
:=\Fid\!\left(\rho_G,\frac{I_r}{r}\right)
=\frac{\norm G_*}{\sqrt r\norm G_F},
\end{equation}
where $\norm\cdot_*$ is the nuclear norm.

\subsubsection{The low-rank side}

By \eqref{eq:low-support}, $D_0$ has rank at most $\alpha r$.  Hence every $G$ in the support of $Q_0$, and therefore of $\Pi_0^{(m)}$, has rank at most $\alpha r$.  Cauchy--Schwarz for singular values gives
\begin{equation}\label{eq:low-T}
T(G)\le\sqrt\alpha
\qquad\Pi_0^{(m)}\text{-a.s.}
\end{equation}

\subsubsection{A Ginibre nuclear-norm bound}

\begin{lemma}[Ginibre nuclear norm]\label{lem:ginibre-nuclear}
There are universal constants $c_g,c'_g>0$ such that, for every $s\ge1$ and every standard complex Ginibre matrix $Z_s\in\C^{s\times s}$,
\begin{equation}\label{eq:ginibre-tail}
\Prb\bigl[\norm{Z_s}_*\le c_gs^{3/2}\bigr]
\le e^{-c'_gs^2}.
\end{equation}
\end{lemma}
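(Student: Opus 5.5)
The plan is to use \cref{lem:gaussian-concentration}, applied to the nuclear norm as a function of the Gaussian entries, together with a lower bound on $\E\norm{Z_s}_*$ of order $s^{3/2}$.

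\emph{Step 1 (Lipschitz constant).} The map $Z\mapsto\norm{Z}_*$ is a norm. Moreover $\norm{X}_*\le\sqrt s\,\norm X_F$ for $s\times s$ matrices. Hence it is $\sqrt s$-Lipschitz with respect to the Frobenius (Euclidean) norm on $\C^{s\times s}\cong\C^{s^2}$. Standard complex Ginibre is a standard complex Gaussian vector in this space, so \cref{lem:gaussian-concentration} applies with $L=\sqrt s$.

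\emph{Step 2 (expectation lower bound).} I will show $\E\norm{Z_s}_*\ge 2c_g s^{3/2}$ for a universal $c_g$. The simplest route uses duality and avoids semicircle or Marchenko--Pastur asymptotics:
\[
\norm{Z}_*\ge\frac{\norm Z_F^2}{\norm Z_{\mathrm{op}}}.
\]
This holds because $\norm Z_F^2=\sum\sigma_i^2\le\sigma_{\max}\sum\sigma_i$. We have $\E\norm Z_F^2=s^2$, and $\norm Z_F^2$ concentrates: it is a sum of $s^2$ independent $\mathrm{Exp}(1)$ variables, so $\norm Z_F^2\ge s^2/2$ with probability at least $1-e^{-cs^2}$. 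For the operator norm, the standard bound $\E\norm{Z}_{\mathrm{op}}\le C\sqrt s$ holds, e.g.\ by Gordon/Slepian or an $\varepsilon$-net argument. Gaussian concentration, since $\norm\cdot_{\mathrm{op}}$ is $1$-Lipschitz in Frobenius norm, then gives $\norm Z_{\mathrm{op}}\le 2C\sqrt s$ with probability $1-e^{-cs}$. On the intersection of these events, $\norm Z_*\ge s^{3/2}/(4C)$, which occurs with probability at least $1/2$ for $s$ beyond a fixed threshold. Since $\norm Z_*\ge0$, this gives $\E\norm Z_*\ge s^{3/2}/(8C)$. For the finitely many small $s$, $\E\norm{Z_s}_*>0$ and the constant can simply be shrunk.

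\emph{Step 3 (conclude).} With $t=\E\norm Z_*-c_gs^{3/2}\ge c_g s^{3/2}$, \cref{lem:gaussian-concentration} gives
\[
\Prb\bigl[\norm{Z_s}_*\le c_gs^{3/2}\bigr]\le\exp\!\left(-c\,\frac{c_g^2s^3}{s}\right)=e^{-c'_gs^2}.
\]

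The main obstacle is Step 2, specifically the operator-norm bound. Rather than a sharp result, I will use the cheap net argument: for fixed unit $x,y$, $\abs{y^*Zx}$ is standard complex Gaussian. A union bound over $\tfrac14$-nets of size $9^{2s}$ on each unit sphere then gives $\norm Z_{\mathrm{op}}\le C\sqrt s$ with probability $1-e^{-s}$, which suffices. It is only the $e^{-s^2}$ rate in the final statement that requires Lipschitz concentration in Step 3; Step 2 needs only a constant-probability event.
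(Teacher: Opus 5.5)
Your proof is correct and has the same structure as the paper's. You obtain a constant-probability lower bound $\norm{Z}_*\gtrsim s^{3/2}$ from a Schatten-norm interpolation, which gives $\E\norm{Z}_*\ge c_*s^{3/2}$, and then \cref{lem:gaussian-concentration} with Lipschitz constant $\sqrt s$ gives the $e^{-c s^2}$ tail. The only difference is the auxiliary norm: you use $\norm Z_F^2\le\norm Z_*\norm Z_{\mathrm{op}}$ with an $\varepsilon$-net bound on the operator norm, whereas the paper uses $\norm Z_F^3\le\norm Z_*\norm Z_4^2$ with the exact Wick computation $\E\tr(ZZ^*)^2=2s^3$ and Markov's inequality, which avoids the net argument.
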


\begin{proof}
Schatten interpolation gives
\begin{equation}\label{eq:schatten-interp}
\norm Z_*
\ge\frac{\norm Z_F^3}{\norm Z_4^2},
\qquad
\norm Z_4^4=\tr(ZZ^*)^2.
\end{equation}
This follows from the log-convexity relation
$\norm Z_F\le\norm Z_*^{1/3}\norm Z_4^{2/3}$ for the Schatten $1,2,4$ norms.

The random variable $\norm Z_F^2$ is a sum of $s^2$ independent mean-one exponentials.  A Chernoff bound gives
\begin{equation}\label{eq:ginibre-F-tail}
\Prb[\norm Z_F^2<s^2/2]\le e^{-c s^2}.
\end{equation}
An index-level Wick calculation gives
\begin{equation}\label{eq:ginibre-fourth-moment}
\E\tr(ZZ^*)^2=2s^3.
\end{equation}
Hence Markov's inequality implies
\[
\Prb[\tr(ZZ^*)^2>8s^3]\le\frac14.
\]
For all sufficiently large $s$, the events
$\norm Z_F^2\ge s^2/2$ and $\tr(ZZ^*)^2\le8s^3$ therefore intersect with probability at least $1/2$.  On their intersection, \eqref{eq:schatten-interp} gives
\[
\norm Z_*
\ge
\frac{(s^2/2)^{3/2}}{(8s^3)^{1/2}}
=\frac18s^{3/2}.
\]
Decreasing the constant to cover the remaining values of $s$ gives
\begin{equation}\label{eq:ginibre-nuclear-mean}
\E\norm Z_*\ge c_*s^{3/2}
\end{equation}
for a universal $c_*>0$.

The map $Z\mapsto\norm Z_*$ is $\sqrt s$-Lipschitz with respect to Frobenius norm, because
\[
\abs{\norm X_*-\norm Y_*}
\le\norm{X-Y}_*
\le\sqrt s\norm{X-Y}_F.
\]
Apply \cref{lem:gaussian-concentration} with
$t=(c_*/2)s^{3/2}$.  It gives
\[
\Prb\!\left[\norm Z_*\le\frac{c_*}{2}s^{3/2}\right]
\le e^{-c's^2}.
\]
Taking $c_g=c_*/2$ proves the lemma.
\end{proof}

For $b=1$, let $J$ be the set from \eqref{eq:moderate-block}; then $s=|J|\ge\beta r$ and $D_{1,J}\succeq(c_0/r)I_J$.  Put $H=D_1^{1/2}ZD_1^{1/2}$.  Haar multiplication preserves singular values, compression is contractive in nuclear norm, and multiplication by a matrix with least singular value $q$ decreases the nuclear norm by at most a factor $q$.  Therefore
\begin{align}
\norm{G}_*=\norm H_*
&\ge\norm{P_JHP_J}_*\notag\\
&=\norm{D_{1,J}^{1/2}Z_{JJ}D_{1,J}^{1/2}}_*\notag\\
&\ge\frac{c_0}{r}\norm{Z_{JJ}}_*.
\label{eq:block-nuclear}
\end{align}
By \cref{lem:ginibre-nuclear}, after decreasing constants if necessary one may take
\[
\kappa=c_g c_0\beta^{3/2}>0,
\]
and then
\begin{equation}\label{eq:nuclear-high-base}
\Prb_{Q_1}[\norm G_*<\kappa\sqrt r]
\le e^{-c r^2}.
\end{equation}

\subsubsection{The size bias does not destroy concentration}

Write
\begin{equation}\label{eq:S-weighted-sum}
S=\norm G_F^2
=\sum_{i,j}a_i^{(1)}a_j^{(1)}X_{ij},
\end{equation}
where the $X_{ij}$ are independent mean-one exponentials.  The coefficients sum to one and satisfy
\[
\zeta:=\max_{i,j}a_i^{(1)}a_j^{(1)}\le\frac{B^2}{r^2}.
\]

\begin{lemma}[Moment bound for a weighted exponential sum]\label{lem:weighted-exp}
For every integer $\ell\ge1$,
\begin{equation}\label{eq:S-moment}
1\le\E S^\ell
\le\prod_{j=0}^{\ell-1}(1+j\zeta)
\le\exp\!\left(\frac{\zeta \ell^2}{2}\right).
\end{equation}
\end{lemma}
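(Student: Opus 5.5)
The plan is to work with the multinomial expansion of $S^\ell$ and use only the moments of exponentials.

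\textbf{Lower bound.} Write $S=\sum_\alpha c_\alpha X_\alpha$, where $\alpha=(i,j)$ and $c_\alpha=a_i^{(1)}a_j^{(1)}\ge0$. The coefficients satisfy $\sum_\alpha c_\alpha=(\sum_i a_i^{(1)})^2=1$. Since $\E S=1$, Jensen's inequality gives $\E S^\ell\ge(\E S)^\ell=1$.

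\textbf{Upper bound, via a recursion.} I will show $\E S^{j+1}\le(1+j\zeta)\E S^j$ and then induct. The tool is the size-biasing identity for exponentials. For a mean-one exponential $X$ and any polynomial $f$, $\E[Xf(X)]=\E[f(X+X')]$, where $X'$ is an independent copy. Equivalently, the size-biased law of $X$ is Gamma$(2)$, i.e.\ the law of $X+X'$. Applying this coordinate by coordinate gives
\[
\E S^{j+1}=\sum_\alpha c_\alpha\E[X_\alpha S^j]=\sum_\alpha c_\alpha\E\bigl[(S+c_\alpha X'_\alpha)^j\bigr].
\]
I then need $\E(S+c_\alpha X')^j\le(1+j\zeta)\E S^j$, and the direct expansion is awkward. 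A cleaner route is to expand $S^{\ell}$ over the multi-indices directly. One has $\E S^\ell=\sum \ell!\prod_\alpha c_\alpha^{k_\alpha}$, summed over compositions $(k_\alpha)$ of $\ell$, because $\E X^k=k!$ cancels the multinomial denominator. This equals $\ell!\,h_\ell(c)$, the complete homogeneous symmetric polynomial evaluated at $c$.

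The standard recursion for $h$ is $\ell h_\ell=\sum_{k=1}^\ell p_k(c)h_{\ell-k}$. With $p_1=1$ and $p_k(c)\le\zeta^{k-1}$, this gives
\[
\E S^\ell=(\ell-1)!\sum_{k=1}^\ell p_k h_{\ell-k}\le\sum_{k=1}^{\ell}\frac{(\ell-1)!}{(\ell-k)!}\zeta^{k-1}\,\E S^{\ell-k}.
\]
Now induct on $\ell$ with the claim $\E S^j\le\prod_{i<j}(1+i\zeta)=:\Pi_j$. Observe that $\Pi_j$ is exactly the value of $\E S^j$ in the extremal case where $c$ has $1/\zeta$ equal entries $\zeta$. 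In that case $S=\zeta\,\mathrm{Gamma}(1/\zeta)$, whose moments are $\prod_{i<j}(1+i\zeta)$, and $p_k=\zeta^{k-1}$ holds with equality. So the recursion with equality reproduces $\Pi_\ell$. Since every coefficient in the recursion is nonnegative, the inequalities $p_k\le\zeta^{k-1}$ and the induction hypothesis propagate monotonically, and $\E S^\ell\le\Pi_\ell$ follows. This argument needs no integrality of $1/\zeta$: I verify directly that $\Pi_\ell=\sum_k\frac{(\ell-1)!}{(\ell-k)!}\zeta^{k-1}\Pi_{\ell-k}$. This identity is the $h$-recursion for the Gamma moment sequence, checkable by the generating function $(1-\zeta z)^{-1/\zeta}$, whose logarithmic derivative is $\sum_k\zeta^{k-1}z^{k-1}$.

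\textbf{Final inequality.} Finally, $\prod_{j<\ell}(1+j\zeta)\le\exp(\zeta\sum_{j<\ell}j)\le\exp(\zeta\ell^2/2)$.

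The main obstacle is getting the comparison recursion right, i.e.\ confirming that the Gamma sequence satisfies the same recursion with equality. Working at the level of generating functions, $\sum_\ell \E S^\ell z^\ell/\ell!=\prod_\alpha(1-c_\alpha z)^{-1}=\exp(\sum_k p_k z^k/k)$, makes this coefficientwise domination by $\exp(\sum_k\zeta^{k-1}z^k/k)$ immediate. That domination is the cleanest way to state the whole argument.
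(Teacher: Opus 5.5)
Your proof is correct and is essentially the paper's argument. The paper writes $\log\E e^{tS}=\sum_{k\ge1}\frac{t^k}{k}p_k(c)$, bounds it coefficientwise by $-\zeta^{-1}\log(1-\zeta t)$ using $p_k(c)\le\zeta^{k-1}$, and then exponentiates; this is exactly the generating-function domination you end with, and your Newton-identity induction is just the unrolled proof that exponentiation preserves coefficientwise order.
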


\begin{proof}
The lower bound is Jensen's inequality.  For the upper bound,
\begin{align*}
\log\E e^{tS}
&=\sum_{k\ge1}\frac{t^k}{k}
  \sum_{i,j}(a_i^{(1)}a_j^{(1)})^k\\
&\preceq\sum_{k\ge1}\frac{t^k}{k}\zeta^{k-1}
=-\frac1{\zeta}\log(1-\zeta t),
\end{align*}
where $\preceq$ denotes coefficientwise inequality of formal power series.  Exponentiation preserves coefficientwise order for series with nonnegative coefficients.  Comparing coefficients in the two exponential generating functions gives
\[
\E S^\ell\le\prod_{j=0}^{\ell-1}(1+j\zeta),
\]
and $\log(1+x)\le x$ gives the final inequality.
\end{proof}

For $m$ in \eqref{eq:m-choice},
\begin{equation}\label{eq:mcstar}
m\zeta\le\frac1L=o(1).
\end{equation}
Under the $S^m$-tilted law,
\begin{align}
\Prb_{\Pi_1^{(m)}}[S>4]
&\le\frac{4^{-m}\E S^{2m}}{\E S^m}\notag\\
&\le\exp(2\zeta m^2-m\log4)
\le e^{-c m}.
\label{eq:S-upper-tilted}
\end{align}
Also, by Cauchy--Schwarz, \eqref{eq:S-moment}, and \eqref{eq:nuclear-high-base},
\begin{align}
\Prb_{\Pi_1^{(m)}}[\norm G_*<\kappa\sqrt r]
&\le
\frac{(\E S^{2m})^{1/2}
\Prb_{Q_1}[\norm G_*<\kappa\sqrt r]^{1/2}}
{\E S^m}\notag\\
&\le\exp(\zeta m^2-cr^2).
\label{eq:nuclear-tilted}
\end{align}
Since
\[
\zeta m^2\le\frac{r^2}{B^2L^2}=o(r^2),
\]
the right-hand side of \eqref{eq:nuclear-tilted} is $e^{-\Omega(r^2)}$.
On the intersection of the events $S\le4$ and $\norm G_*\ge\kappa\sqrt r$, one has $T(G)\ge\kappa/2$.  Set
\[
\gamma=\kappa/2.
\]
Combining \eqref{eq:S-upper-tilted} and \eqref{eq:nuclear-tilted} gives
\begin{equation}\label{eq:high-T-prob}
\Prb_{\Pi_1^{(m)}}[T(G)<\gamma]
\le e^{-c m}+e^{-c r^2}=o(1).
\end{equation}

Choose the constant $\alpha$ in \cref{prop:moment-twins} so small that
\begin{equation}\label{eq:alpha-choice}
\sqrt\alpha\le\gamma/4.
\end{equation}
This is possible without changing the lower bounds on $\beta$ and $c_0$, by \cref{prop:moment-twins}.
Condition $\Pi_1^{(m)}$ on $T(G)\ge\gamma$, and denote the result by $\widetilde\Pi_1^{(m)}$.  Conditioning changes each average state by at most the discarded probability in \eqref{eq:high-T-prob}.  Thus \cref{cor:base-indist} remains valid, and
\begin{equation}\label{eq:base-gap}
T(G)\le\gamma/4
\quad\Pi_0^{(m)}\text{-a.s.},
\qquad
T(G)\ge\gamma
\quad\widetilde\Pi_1^{(m)}\text{-a.s.}
\end{equation}

\subsection{Embedding into noncommuting states}\label{sec:embedding}

The base functional is fidelity to the maximally mixed state and therefore does not itself exhibit noncommutativity.  The following direct-sum construction preserves the gap while producing a fixed rank-$r$ reference and a noncommuting hard family.

Let the supplied system be $T\oplus B$, where $T\cong B\cong\C^r$, and let the purifying space be $E\cong\C^r$.  For $G\ne0$, regard \eqref{eq:psi-G} as a state on $T\otimes E$.  Define the fixed maximally entangled state
\begin{equation}\label{eq:phi-fixed}
\ket\phi_{BE}
=\frac1{\sqrt r}\sum_{j=1}^r\ket{B_j}\ket{E_j}.
\end{equation}
For $0<q<1$, set
\begin{equation}\label{eq:Phi-Gq}
\ket{\Phi_{G,q}}
=\sqrt q\ket{\psi_G}_{TE}
 +\sqrt{1-q}\ket\phi_{BE},
\qquad
\omega_{G,q}=\tr_E\proj{\Phi_{G,q}}.
\end{equation}
Take the known reference state
\begin{equation}\label{eq:sigma-final}
\sigma=\frac1rP_T.
\end{equation}
In the $T\oplus B$ decomposition, a direct partial trace gives
\begin{equation}\label{eq:omega-block}
\omega_{G,q}
=
\begin{pmatrix}
qGG^*/S & \sqrt{q(1-q)}G/\sqrt{rS}\\
\sqrt{q(1-q)}G^*/\sqrt{rS} & (1-q)I_r/r
\end{pmatrix},
\qquad S=\norm G_F^2.
\end{equation}
Hence
\begin{equation}\label{eq:fidelity-embedding}
\Fid(\omega_{G,q},\sigma)
=\sqrt q\,T(G).
\end{equation}
For $0<q<1$ and $G\ne0$, the off-diagonal block in \eqref{eq:omega-block} is nonzero, so
\begin{equation}\label{eq:noncommuting}
[\omega_{G,q},\sigma]\ne0.
\end{equation}

\subsubsection{Occupation sectors and binomial thinning}

The priors on $G$, including the conditioned high prior, are invariant under $G\mapsto e^{i\theta}G$.  This phase is irrelevant to $\proj{\psi_G}$ but becomes a relative phase in \eqref{eq:Phi-Gq}, producing the following occupation-sector decomposition.

Put
\[
\mathcal H_T=T\otimes E,
\qquad
\mathcal H_B=B\otimes E.
\]
For a subset $S=\{s_1<\cdots<s_k\}\subseteq[n]$, define an isometry
\[
J_S:\mathcal H_T^{\otimes k}
\longrightarrow
(\mathcal H_T\oplus\mathcal H_B)^{\otimes n}
\]
by placing the $j$th input tensor factor in position $s_j$ and $\ket\phi$ from \eqref{eq:phi-fixed} outside $S$.  The ranges of $J_S$ and $J_{S'}$ are orthogonal for $S\ne S'$.  Define
\begin{equation}\label{eq:Vnk-def}
\mathcal V_{n,k}
=\binom nk^{-1/2}
\sum_{\substack{S\subseteq[n]\\|S|=k}}J_S.
\end{equation}
Then $\mathcal V_{n,k}^*\mathcal V_{n,k}=I$, and its ranges are mutually orthogonal across $k$.

\begin{lemma}[Orthogonal binomial thinning]\label{lem:binomial-thinning}
Let $\Pi$ be a phase-invariant prior on nonzero matrices $G$, and set
\[
\overline\Psi_\Pi^{(k)}
=\E_{G\sim\Pi}\proj{\psi_G}^{\otimes k},
\qquad
\overline\Phi_{\Pi,q}^{(n)}
=\E_{G\sim\Pi}\proj{\Phi_{G,q}}^{\otimes n}.
\]
Then
\begin{equation}\label{eq:binomial-direct-sum}
\overline\Phi_{\Pi,q}^{(n)}
=
\bigoplus_{k=0}^n
b_{n,q}(k)\,
\mathcal V_{n,k}
\overline\Psi_\Pi^{(k)}
\mathcal V_{n,k}^*,
\qquad
b_{n,q}(k)=\binom nkq^k(1-q)^{n-k}.
\end{equation}
Consequently, for two phase-invariant priors $\Pi_0,\Pi_1$,
\begin{equation}\label{eq:binomial-trace-exact}
d_{\mathrm{tr}}\!\left(
\overline\Phi_{\Pi_0,q}^{(n)},
\overline\Phi_{\Pi_1,q}^{(n)}
\right)
=
\sum_{k=0}^n b_{n,q}(k)
\,d_{\mathrm{tr}}\!\left(
\overline\Psi_{\Pi_0}^{(k)},
\overline\Psi_{\Pi_1}^{(k)}
\right).
\end{equation}
\end{lemma}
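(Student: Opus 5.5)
Write $\ket{\psi_G}$ for the $\mathcal H_T$ component and $\ket\phi$ for the $\mathcal H_B$ component. Expand the $n$-fold tensor power of $\ket{\Phi_{G,q}}$ in the orthogonal decomposition $(\mathcal H_T\oplus\mathcal H_B)^{\otimes n}=\bigoplus_{S\subseteq[n]}\mathcal H_T^{\otimes S}\otimes\mathcal H_B^{\otimes S^c}$. This gives
\[
\ket{\Phi_{G,q}}^{\otimes n}=\sum_{S\subseteq[n]}q^{|S|/2}(1-q)^{(n-|S|)/2}J_S\ket{\psi_G}^{\otimes|S|}.
\]
Grouping by $k=|S|$ and using the definition \eqref{eq:Vnk-def}, we get
\[
\ket{\Phi_{G,q}}^{\otimes n}=\sum_{k=0}^n\sqrt{b_{n,q}(k)}\,\mathcal V_{n,k}\ket{\psi_G}^{\otimes k}.
\]

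The key step is that the states $J_S\ket{\psi_G}^{\otimes k}$ do not depend on the ordering within $S$, since all placed factors are the same vector $\ket{\psi_G}$. Hence the sum over $S$ with $|S|=k$ equals $\binom nk^{1/2}\mathcal V_{n,k}\ket{\psi_G}^{\otimes k}$. I will also verify the stated isometry properties. For $|S|=|S'|=k$, $J_S^*J_{S'}=\delta_{SS'}I$ because the ranges are orthogonal, which gives $\mathcal V_{n,k}^*\mathcal V_{n,k}=\binom nk^{-1}\sum_S I=I$. The ranges for different $k$ lie in different occupation sectors and are therefore orthogonal.

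Next comes the phase average, which I expect to be the only real point. We have
\[
\proj{\Phi_{G,q}}^{\otimes n}=\sum_{k,k'}\sqrt{b_{n,q}(k)\,b_{n,q}(k')}\;\mathcal V_{n,k}\ket{\psi_G}^{\otimes k}\bra{\psi_G}^{\otimes k'}\mathcal V_{n,k'}^*.
\]
Under $G\mapsto e^{i\theta}G$, the vector $\ket{\psi_G}$ becomes $e^{i\theta}\ket{\psi_G}$, while $\ket\phi$ is fixed. The $(k,k')$ term therefore picks up the factor $e^{i(k-k')\theta}$. Since $\Pi$ is phase invariant, $\E_\Pi X(G)=\E_\Pi\frac1{2\pi}\int_0^{2\pi} X(e^{i\theta}G)\,d\theta$, and this average kills every term with $k\ne k'$. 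The diagonal terms give exactly \eqref{eq:binomial-direct-sum}, because $\E_\Pi\proj{\psi_G}^{\otimes k}=\overline\Psi^{(k)}_\Pi$. For $k=0$, the block is the fixed state $\proj\phi^{\otimes n}$, consistent with $\overline\Psi^{(0)}=1$.

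For the trace-distance identity, apply \eqref{eq:block-trace-norm} to the difference of the two direct sums; the blocks have pairwise orthogonal supports because the ranges of the $\mathcal V_{n,k}$ are orthogonal. Since $\mathcal V_{n,k}$ is an isometry, $\norm{\mathcal V_{n,k}X\mathcal V_{n,k}^*}_1=\norm X_1$. Halving the resulting sum gives \eqref{eq:binomial-trace-exact}.

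No step is hard. The one place needing care is checking that the phase covariance acts only on the $T$ components. This holds because the phase multiplies $\operatorname{vec}(G)$ but $\ket{\psi_G}$ is normalized by $\norm G_F$, which is phase invariant.
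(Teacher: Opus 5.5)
Your proposal is correct and follows essentially the same route as the paper. You expand $\ket{\Phi_{G,q}}^{\otimes n}$ over occupation sets $S$ and use phase invariance of $\Pi$ to kill the cross terms between different cardinalities; the surviving blocks are $b_{n,q}(k)\,\mathcal V_{n,k}\overline\Psi_\Pi^{(k)}\mathcal V_{n,k}^*$, and orthogonality of the ranges together with the isometry property of $\mathcal V_{n,k}$ then gives the exact trace-distance identity. Your check that $\mathcal V_{n,k}^*\mathcal V_{n,k}=I$ via $J_S^*J_{S'}=\delta_{SS'}I$ fills in a detail the paper only asserts.
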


\begin{proof}
For a fixed $G$, expand
\[
\ket{\Phi_{G,q}}^{\otimes n}
=
\sum_{S\subseteq[n]}
q^{|S|/2}(1-q)^{(n-|S|)/2}
J_S\ket{\psi_G}^{\otimes |S|}.
\]
Replacing $G$ by $e^{i\theta}G$ multiplies the summand indexed by $S$ by $e^{i|S|\theta}$.  Phase averaging removes all ket--bra cross terms with different cardinalities.  For fixed $k$, the surviving sum is
\[
b_{n,q}(k)\,
\mathcal V_{n,k}
\proj{\psi_G}^{\otimes k}
\mathcal V_{n,k}^*.
\]
Averaging over $G$ proves \eqref{eq:binomial-direct-sum}.  The block ranges for distinct $k$ are orthogonal, and $\mathcal V_{n,k}$ preserves trace norm.  Therefore \eqref{eq:block-trace-norm} gives the exact identity \eqref{eq:binomial-trace-exact}.
\end{proof}

Apply the lemma to
\[
\widehat\Pi_0^{(m)}=\Pi_0^{(m)},
\qquad
\widehat\Pi_1^{(m)}=\widetilde\Pi_1^{(m)},
\]
the two priors satisfying \eqref{eq:base-gap}.  Let
\[
\eta_r=
 d_{\mathrm{tr}}\!\left(
 \overline\Psi_{\widehat\Pi_0^{(m)}}^{(m)},
 \overline\Psi_{\widehat\Pi_1^{(m)}}^{(m)}
 \right)=o(1).
\]
For $k\le m$, the $k$-copy averages under these $m$-dependent priors are partial traces of the $m$-copy averages.  Hence
\begin{equation}\label{eq:k-less-m}
d_{\mathrm{tr}}\!\left(
\overline\Psi_{\widehat\Pi_0^{(m)}}^{(k)},
\overline\Psi_{\widehat\Pi_1^{(m)}}^{(k)}
\right)
\le\eta_r,
\qquad k\le m.
\end{equation}
For $k>m$, use the trivial bound one.  Then \eqref{eq:binomial-trace-exact} gives
\begin{equation}\label{eq:graph-trace-bound}
d_{\mathrm{tr}}\!\left(
\overline\Phi_0^{(n)},
\overline\Phi_1^{(n)}
\right)
\le
\eta_r+\Prb[\operatorname{Bin}(n,q)>m].
\end{equation}
Choose
\[
n=\left\lfloor\frac{m}{4q}\right\rfloor.
\]
For all sufficiently large $r$, this integer satisfies
\begin{equation}\label{eq:nq-choice}
nq\le m/4
\qquad\text{and}\qquad
n\ge m/(8q).
\end{equation}
The standard binomial Chernoff bound gives
\begin{equation}\label{eq:bin-tail}
\Prb[\operatorname{Bin}(n,q)>m]
\le\left(\frac e4\right)^m
\le e^{-cm}.
\end{equation}
Finally, if
\[
\overline\Omega_b^{(n)}
=\E_{G\sim\widehat\Pi_b^{(m)}}\omega_{G,q}^{\otimes n}
\]
denotes the actual prior-averaged state on the supplied systems, then
\[
\overline\Omega_b^{(n)}
=\tr_{E^{\otimes n}}\overline\Phi_b^{(n)}.
\]
Trace-distance contractivity therefore transfers \eqref{eq:graph-trace-bound} to the actual $n$-copy experiment.

In summary, the embedding scales the fidelity gap by $\sqrt q$, while binomial thinning transfers $m$-copy base indistinguishability to $n=\Theta(m/q)$ embedded copies.  Taking $q=\Theta(\varepsilon^2)$ yields the desired $1/\varepsilon^2$ dependence.

\section{Proof of the main theorem and consequences}

\subsection{Proof of the main theorem}

\subsubsection{Order of constant choices}\label{sec:constant-ledger}

Choose the universal constants in the following order.

\begin{enumerate}[leftmargin=2em]
\item Fix $c_g,c'_g$ from \cref{lem:ginibre-nuclear} and take $\beta=1/2$, $c_0=1/4$ in \cref{prop:moment-twins}.

\item These constants determine
\[
\kappa=c_g c_0\beta^{3/2}
\qquad\text{and}\qquad
\gamma=\kappa/2
\]
in \cref{sec:separation}.  Set a target support fraction $\alpha_*>0$ small enough that $\sqrt{\alpha_*}\le\gamma/4$.

\item Choose $M_*$ and $\eta$ in \cref{prop:moment-twins} to obtain $\alpha\le\alpha_*$ while retaining $\beta,c_0,c_1$.  This fixes the entry bound and the constant in \eqref{eq:packet-K-condition}.

\item Choose $c_{\mathrm K}>0$ in \eqref{eq:K-choice} below that constant, then choose $A$ in \eqref{eq:m-choice} so that $Ac_{\mathrm K}>4$.
\end{enumerate}

All remaining inequalities then hold for $r\ge r_0$ with universal $r_0$.

\begin{proof}[Proof of \cref{thm:main}]
Let $\gamma$ be the universal constant in \eqref{eq:base-gap} and define
\[
\Delta=\gamma-\gamma/4=3\gamma/4.
\]
Fix
\[
0<\varepsilon\le\varepsilon_0:=\Delta/8
\]
and choose
\begin{equation}\label{eq:q-epsilon}
q=\left(\frac{4\varepsilon}{\Delta}\right)^2.
\end{equation}
Then $0<q\le1/4$, and by \eqref{eq:fidelity-embedding} the two fidelity intervals are separated by
\[
\sqrt q\,\Delta=4\varepsilon.
\]
Choose $n$ as in \eqref{eq:nq-choice}.  By \eqref{eq:graph-trace-bound}, \eqref{eq:bin-tail}, and trace-distance contractivity, the two $n$-copy prior-averaged states are $o(1)$ apart.  An estimator that is $\varepsilon$-accurate with probability at least $2/3$ on both supports would distinguish the priors with the same success probability, contradicting \cref{lem:bayes} for sufficiently large $r$.

Using \eqref{eq:nq-choice}, \eqref{eq:q-epsilon}, and \eqref{eq:m-polylog},
\[
n
\ge
\frac{m}{8q}
\ge
c\frac{r^2}{\varepsilon^2(\log r)^{C_2}}.
\]
The reference \eqref{eq:sigma-final} has rank $r$, and \eqref{eq:noncommuting} gives noncommutativity for every $G\ne0$.  The upper bound of~\cite{Wang2026} completes the $\widetilde\Theta$ characterization.
\end{proof}

\subsection{Consequences for spectrum estimation}\label{sec:consequences}

Before embedding into noncommuting states, the base experiment already yields lower bounds for spectrum estimation.  Recall
\[
\rho_G=\frac{GG^*}{\norm G_F^2},
\qquad
T_r(\rho_G)=\frac1{\sqrt r}\tr\sqrt{\rho_G}=T(G).
\]
Let $\widehat\Pi_0^{(m)},\widehat\Pi_1^{(m)}$ be the two conditioned base priors from \eqref{eq:base-gap}.  Their prior-averaged pure states obey
\begin{equation}\label{eq:base-pure-consequence}
d_{\mathrm{tr}}\!\left(
\E_{\widehat\Pi_0^{(m)}}\proj{\psi_G}^{\otimes m},
\E_{\widehat\Pi_1^{(m)}}\proj{\psi_G}^{\otimes m}
\right)=o(1),
\end{equation}
whereas
\begin{equation}\label{eq:base-functional-consequence}
T_r(\rho_G)\le\gamma/4
\quad\widehat\Pi_0^{(m)}\text{-a.s.},
\qquad
T_r(\rho_G)\ge\gamma
\quad\widehat\Pi_1^{(m)}\text{-a.s.}
\end{equation}
Taking a partial trace in \eqref{eq:base-pure-consequence} gives the same indistinguishability statement for copies of the mixed states $\rho_G$.

\begin{proof}[Proof of \cref{cor:half-power-intro}]
Let $\Delta=3\gamma/4$ and take $\delta_0=\Delta/4$.  The two ranges in \eqref{eq:base-functional-consequence} are separated by $\Delta>2\delta_0$, while their $m$-copy prior-averaged states have trace distance $o(1)$.  The two-prior principle therefore gives a lower bound of $m\ge r^2/(\log r)^{C_2}$ copies.  Since \eqref{eq:base-pure-consequence} already concerns purifications, the stronger access model does not change the conclusion.
\end{proof}

For probability vectors $p,q\in\R_+^r$, define
\[
T_r(p)=\frac1{\sqrt r}\sum_{i=1}^r\sqrt{p_i}.
\]
The functional satisfies
\begin{align}
\abs{T_r(p)-T_r(q)}
&\le\frac1{\sqrt r}\sum_i\abs{\sqrt{p_i}-\sqrt{q_i}}\notag\\
&\le\left(\sum_i(\sqrt{p_i}-\sqrt{q_i})^2\right)^{1/2}\notag\\
&\le\sqrt{\norm{p-q}_1}
=\sqrt{2\TV(p,q)}.
\label{eq:T-TV-continuity}
\end{align}

\begin{proof}[Proof of \cref{cor:spectrum-intro}]
Again let $\Delta=3\gamma/4$.  Choose a universal constant $\delta_0>0$ so small that
\[
\sqrt{2\delta_0}\le\Delta/8.
\]
Suppose a spectrum estimator used fewer than $m$ copies and, for every input state, returned a probability vector $\widehat p$ within total variation distance $\delta_0$ of the sorted spectrum with probability at least $2/3$.  By \eqref{eq:T-TV-continuity}, $T_r(\widehat p)$ would approximate $T_r(\rho_G)$ to error at most $\Delta/8$ on every successful run.  Thresholding between the intervals in \eqref{eq:base-functional-consequence} would contradict \eqref{eq:base-pure-consequence} and \cref{lem:bayes}.  Thus $m=\Omega(r^2/(\log r)^{C_2})$ copies are necessary; compare with~\cite{PSTW2026}.
\end{proof}

For an $r$-dimensional state $\rho$, let
\[
\operatorname{dist}_k(\rho)
=\inf_{\substack{\tau\text{ a state}\\\rank\tau\le k}}
d_{\mathrm{tr}}(\rho,\tau).
\]
If $p_1\ge\cdots\ge p_r$ are the eigenvalues of $\rho$ and
$\tau_k=\sum_{i>k}p_i$, then
\begin{equation}\label{eq:rank-distance-tail}
\operatorname{dist}_k(\rho)=\tau_k.
\end{equation}
The upper bound follows by renormalizing the top $k$ eigenvalues; the lower bound follows by measuring the support projector of any rank-$k$ candidate and applying Ky Fan's maximum principle.

\begin{corollary}[Two-sided rank testing]\label{cor:rank-testing}
There are universal constants $0<\alpha,\delta<1$ and $c,C>0$ such that distinguishing
\[
\rank\rho\le\alpha r
\qquad\text{from}\qquad
\operatorname{dist}_{\lceil\alpha r\rceil}(\rho)\ge\delta
\]
with constant success probability requires at least
\[
c\frac{r^2}{(\log r)^C}
\]
copies.
\end{corollary}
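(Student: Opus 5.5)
We reuse the two conditioned base priors $\widehat\Pi_0^{(m)},\widehat\Pi_1^{(m)}$ from \eqref{eq:base-gap}, read now as priors on the mixed states $\rho_G=GG^*/\norm G_F^2$ on $\C^r$. By \eqref{eq:base-pure-consequence} and a partial trace, their $m$-copy averages $\E\rho_G^{\otimes m}$ are $o(1)$ apart in trace distance. Also $m\ge r^2/(\log r)^{C_2}$ by \eqref{eq:m-polylog}. By \cref{lem:bayes}, applied to a binary hypothesis, it then suffices to show two things: the low prior is supported on the "rank at most $\alpha r$" hypothesis, and the high prior is supported on the "far from rank $\lceil\alpha r\rceil$" hypothesis. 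Any tester with success $2/3$ on fewer than $m$ copies would then contradict the indistinguishability.

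The low side is immediate. By \eqref{eq:low-support}, $D_0$ has rank at most $\alpha r$, so every $G$ in the support of $\Pi_0^{(m)}$ has $\rank\rho_G\le\alpha r$. Here $\alpha$ is the same universal constant fixed in \cref{sec:constant-ledger}.

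For the high side, I use \eqref{eq:rank-distance-tail} and convert the lower bound $T_r(\rho_G)\ge\gamma$ into a lower bound on the tail mass $\tau_k$ with $k=\lceil\alpha r\rceil$. Let $p$ be the sorted spectrum. Split $\sum_i\sqrt{p_i}$ into the top $k$ eigenvalues and the rest. Cauchy--Schwarz bounds the top part by $\sqrt{k}\sqrt{1-\tau_k}\le\sqrt{k}\le\sqrt{\alpha r+1}$, and the tail by $\sqrt{r\tau_k}$. Hence
\[
\gamma\sqrt r\le\tr\sqrt{\rho_G}\le\sqrt{\alpha r+1}+\sqrt{r\tau_k},
\]
so $\sqrt{\tau_k}\ge\gamma-\sqrt{\alpha}-o(1)\ge\gamma/2$ by \eqref{eq:alpha-choice}, for large $r$. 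Setting $\delta=\gamma^2/4$ therefore gives $\operatorname{dist}_{\lceil\alpha r\rceil}(\rho_G)\ge\delta$ almost surely under $\widetilde\Pi_1^{(m)}$. Both $\alpha$ and $\delta$ are universal, and $0<\alpha,\delta<1$.

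For "constant success probability", note that any fixed success probability above $1/2$ can be boosted to $2/3$ by majority vote with a constant-factor blowup in copies, which only changes $c$. Alternatively, the proof of \cref{lem:bayes} already works with $1/3$ replaced by any constant, since the trace distance is $o(1)$. The only genuinely new step is the eigenvalue-tail estimate above, and I expect it to be routine. The one point needing care is that the $+1$ from the ceiling and the $o(1)$ slack are absorbed for $r\ge r_0$.
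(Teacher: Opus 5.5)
Your proposal is correct and matches the paper's proof: reuse the conditioned base priors, note that the low prior is supported on rank at most $\alpha r$, and apply Cauchy--Schwarz to the top-$k$/tail split of $\sum_i\sqrt{p_i}$, together with $\sqrt\alpha\le\gamma/4$, to force $\tau_k=\operatorname{dist}_{\lceil\alpha r\rceil}(\rho_G)$ bounded below by a constant. The only difference is cosmetic: the paper bounds $\sqrt{k/r}\le\gamma/3$ and gets $\delta=(2\gamma/3)^2$, while you absorb the ceiling as $\sqrt{\alpha+1/r}$ and get $\delta=\gamma^2/4$.
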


\begin{proof}
The low prior satisfies the first promise.  Let $p$ be a spectrum in the high prior, put $k=\lceil\alpha r\rceil$, and let $\tau_k=\sum_{i>k}p_i$.  Cauchy--Schwarz gives
\begin{align}
T_r(p)
&=\frac1{\sqrt r}\sum_{i\le k}\sqrt{p_i}
  +\frac1{\sqrt r}\sum_{i>k}\sqrt{p_i}\notag\\
&\le\sqrt{\frac{k}{r}}+\sqrt{\tau_k}.
\label{eq:T-rank-tail}
\end{align}
For all sufficiently large $r$, the choice $\sqrt\alpha\le\gamma/4$ implies
$\sqrt{k/r}\le\gamma/3$.  Since the high prior has $T_r(p)\ge\gamma$, \eqref{eq:T-rank-tail} yields
\[
\tau_k\ge(2\gamma/3)^2.
\]
Equation \eqref{eq:rank-distance-tail} supplies the second promise with $\delta=(2\gamma/3)^2$.  The prior-averaged states remain indistinguishable for $m=r^2/\polylog r$ copies, proving the lower bound.
\end{proof}

\section{Discussion}

Together, our two main results close Wang's factor-$r$ gap for fidelity estimation to a known low-rank reference and establish a near-quadratic sample-complexity barrier for quantum spectrum estimation at constant accuracy.

The central device is a size-biased Schur measure: radial bias removes the normalization from tensor powers of random matrix amplitudes, and the Cauchy identity converts low-order moment matching into short-cycle cancellation in a weighted permutation model.  Combined with the equal-weight moment construction and the direct-sum embedding into noncommuting states, this yields the near-quadratic lower bound and the $1/\varepsilon^2$ dependence.

The fidelity result determines the polynomial dependence on $r$ and $\varepsilon$ under unrestricted collective measurements, but not the optimal logarithmic factor or sharp constants.  The logarithmic loss in the lower bound arises from the range $K=\Theta(\log r/\log\log r)$ in the packet correction and the factor $L=(\log r)^A$ in \eqref{eq:m-choice}.  Improving either step may reduce this loss.  Other open problems include restricted-measurement lower bounds and applications of the size-biased Schur construction to other nonpolynomial spectral functionals.

\section{AI Usage Disclosure}

GPT-5.6 Pro was used in exploratory discussions during the development of the proof and to assist in drafting a preliminary version of the manuscript.  The authors take full responsibility for the paper's contents and correctness.

\appendix

\section{Index-level derivation of the Wishart--Schur identity}\label{app:wick}

We derive \eqref{eq:wick-tensor} and \eqref{eq:wishart-schur} entrywise to fix the permutation and normalization conventions.

Let $B=\operatorname{diag}(b_1,\ldots,b_r)$ and set $Y=ZBZ^*$.  For multi-indices $\mathbf i=(i_1,\ldots,i_m)$ and $\mathbf j=(j_1,\ldots,j_m)$,
\begin{align}
\E\prod_{t=1}^mY_{i_tj_t}
&=
\sum_{\alpha_1,\ldots,\alpha_m}
\left(\prod_{t=1}^m b_{\alpha_t}\right)
\E\prod_{t=1}^m
Z_{i_t\alpha_t}\overline{Z_{j_t\alpha_t}}.
\label{eq:wick-index-start}
\end{align}
The complex Wick formula gives
\begin{equation}\label{eq:complex-wick-index}
\E\prod_{t=1}^m
Z_{i_t\alpha_t}\overline{Z_{j_t\alpha_t}}
=
\sum_{\pi\in S_m}
\prod_{t=1}^m
\delta_{i_t,j_{\pi(t)}}
\delta_{\alpha_t,\alpha_{\pi(t)}}.
\end{equation}
For a fixed $\pi$, summing the $\alpha$ indices along each cycle gives
\[
\prod_{c\in\operatorname{Cycles}(\pi)}
\sum_{a=1}^r b_a^{|c|}
=p_\pi(b).
\]
Thus
\begin{equation}\label{eq:wick-index-finish}
\E\prod_{t=1}^mY_{i_tj_t}
=
\sum_{\pi\in S_m}p_\pi(b)
\prod_{t=1}^m\delta_{i_t,j_{\pi(t)}}.
\end{equation}
If tensor permutations are defined by
\[
P_\pi(v_1\otimes\cdots\otimes v_m)
=v_{\pi^{-1}(1)}\otimes\cdots\otimes v_{\pi^{-1}(m)},
\]
then the matrix elements on the right of \eqref{eq:wick-index-finish} are those of $P_{\pi^{-1}}$.  Replacing $\pi$ by $\pi^{-1}$, which preserves cycle type, proves
\[
\E Y^{\otimes m}=\sum_{\pi\in S_m}p_\pi(b)P_\pi.
\]

Now let $W=A^{1/2}YA^{1/2}$.  Since $A^{\otimes m}$ commutes with every tensor permutation,
\[
\E W^{\otimes m}
=A^{\otimes m}\sum_{\pi\in S_m}p_\pi(b)P_\pi.
\]
On the Schur--Weyl block $\mathcal Q_\lambda\otimes\mathcal P_\lambda$,
$A^{\otimes m}$ acts as the polynomial representation $q_\lambda(A)$ on $\mathcal Q_\lambda$, and $P_\pi$ acts as the irreducible representation $\rho_\lambda(\pi)$ on $\mathcal P_\lambda$.  Therefore
\begin{equation}\label{eq:appendix-block-trace}
\tr(\mathsf P_\lambda A^{\otimes m}P_\pi)
=\tr q_\lambda(A)\,\tr\rho_\lambda(\pi)
=s_\lambda(a)\chi^\lambda(\pi).
\end{equation}
Likewise,
\[
\tr(\mathsf P_\lambda W^{\otimes m})
=f^\lambda s_\lambda(W).
\]
Consequently,
\begin{align*}
\E s_\lambda(W)
&=\frac1{f^\lambda}
\sum_{\pi\in S_m}p_\pi(b)
\tr(\mathsf P_\lambda A^{\otimes m}P_\pi)\\
&=\frac{s_\lambda(a)}{f^\lambda}
\sum_{\pi\in S_m}\chi^\lambda(\pi)p_\pi(b)\\
&=\frac{m!}{f^\lambda}s_\lambda(a)s_\lambda(b),
\end{align*}
where the final equality is the Frobenius formula.  This proves \cref{lem:wishart-schur}.

\section{Normalization checks for the twirl blocks}\label{app:twirl-checks}

The explicit block \eqref{eq:Omega-lambda-explicit} has trace one because
\[
\tr\frac{I_{\mathcal Q_\lambda^A}}{d_\lambda}
=\tr\frac{I_{\mathcal Q_\lambda^B}}{d_\lambda}=1,
\qquad
\norm{\Phi_\lambda^{\mathcal P}}=1.
\]
The total weight in \eqref{eq:twirl-decomp} is
\[
\sum_{\lambda\vdash m}f^\lambda s_\lambda(p)
=\left(\sum_i p_i\right)^m=1
\]
by \eqref{eq:p1-schur}.  Thus \eqref{eq:twirl-decomp} is a convex orthogonal decomposition into density operators.  In the Wishart prior-averaged state, homogeneity replaces $s_\lambda(p)$ by $S^{-m}s_\lambda(W)$, while the $S^m$ radial tilt cancels the denominator.  The factor $f^\lambda$ from the Schur-label probability then cancels the factor $1/f^\lambda$ in \eqref{eq:wishart-schur}, leaving exactly the unnormalized weight $m!s_\lambda(a)^2$ in \cref{prop:schur-bayes}.

\end{document}